\newtheorem{theorem}{Theorem}
\newtheorem{lemma}{Lemma}
\newtheorem{definition}{Definition}
\newtheorem{remark}{Remark}
\newtheorem{example}{Example}
\newtheorem*{example*}{Example 1 revisited}
\newtheorem{proportion}{Proportion}
\begin{document}

\title{Single-Server Multi-Message Private Information Retrieval with Side Information}
\author{Su Li and Michael Gastpar\\EPFL}

\maketitle

\begin{abstract}
We study the problem of single-server multi-message private information retrieval with side information. 
One user wants to recover $N$ out of $K$ independent messages which are stored at a single server. The user initially possesses a subset of $M$ messages as side information.
The goal of the user is to download the $N$ demand messages while not leaking any information about the indices of these messages to the server.
In this paper, we characterize the minimum number of required transmissions. 
We also present the optimal linear coding scheme which enables the user to download the demand messages and preserves the privacy of their indices.
Moreover, we show that the trivial MDS coding scheme with $K-M$ transmissions is optimal if $N>M$ or $N^2+N \ge K-M$. 
This means if one wishes to privately download more than the square-root of the number of files in the database, then one must effectively download the full database (minus the side information), irrespective of the amount of side information one has available.

\end{abstract}


\IEEEpeerreviewmaketitle

\section{Introduction}
Consider $K$ independent messages stored at a single server. 
One user wants to download $N$ messages from the sever while it already has $M$ messages as side information.
The user sends queries to the server and the server replies with (coded) messages according to the user's requests.
Private information retrieval requires that the server should not be able to infer any information about the indices of the messages that the user wants to download.
We refer to this problem as Single-server Multi-message Private Information Retrieval with Side Information (SMPIRSI).
To solve the SMPIRSI problem, we need to find the minimum required number of transmissions that the user should request from the server and the optimal linear coding scheme which enables the user to decode the demand messages while protecting the indices of the demand messages from the server.

\subsection{Related Work}
The Private Information Retrieval (PIR) problem was first studied in~\cite{492461} from a computational complexity perspective. 
Recently, the PIR problem attracted considerable attention in the information theory society and many works study this problem from an information-theoretic point of view~\cite{1181949,Yekhanin:2010:PIR:1721654.1721674,7889028,8119895}.
A single user wants to privately download one message from a database. 
To achieve perfect privacy in the information-theoretic sense, if the database is only stored at one server, the user has to download all messages.
The problem becomes more interesting if one supposes that the database is stored in multiple servers and there is no collusion between these servers.
By exploiting the advantages of replications of the database in non-colluding servers, private information retrieval can be achieved without downloading all messages and the capacity of this problem is characterized in~\cite{7889028}.
Ensuing work has studied many variations of this theme, including databases coded by erasure codes~\cite{6874954,7282975,8006763,7997029,7997393,7541531}, partial colluding servers~\cite{8119895,8006861,2017arXiv170601442B}, side information messages available at users~\cite{2017arXiv170900112K,2017arXiv170607035T,2017arXiv170901056W,2017arXiv170903022C,Li1806:Single} and  multiple messages~\cite{8006859,DBLP:journals/corr/abs-1805-11892},

In~\cite{8006859}, Banawan and Ulukus consider the problem that the user wants to download multiple messages from multiple servers, but there is no side information at the user.
In~\cite{DBLP:journals/corr/abs-1805-11892}, Shariatpanahi {\it et al.}  study the multi-message PIR problem with side information and the user wants to protect both the privacy of the indices of demand messages and of the side information messages.
In our problem, the user is only interested in protecting the privacy of the indices of the demand messages, which is a more challenging problem than protecting both the indices of the demand and side information messages.
The single-server multi-message PIR with side information problem is studied concurrently in~\cite{2018arXiv180709908H}, which has the same results as us when $N>M$ and presents achievability results when $N\le M$\footnote{Our work and~\cite{2018arXiv180709908H} study the same problem independently. We submitted our work to the 56th Allerton conference on 9th July 2018.}.

\subsection{Contributions}

\begin{itemize}
	\item[(1)] We present a closed-form expression for the minimum number of required transmissions for SMPIRSI problem.
		
	\item[(2)] We propose a novel method, Partition-and-MDS-Coding, to generate optimal linear coding schemes with satisfy the requirements of SMPIRSI and use the minimum number of transmissions.
	
	\item[(3)] We show that the trivial MDS coding scheme with $K-M$ transmissions is optimal when the number of demand messages satisfies either $N> M$ or $N^2+N \ge K-M$.
\end{itemize}

\section{System Model and Definition}

Consider a server which stores $K$ independent messages, denoted by $\mathbf{X} = \{X_1,\dots,X_K\}$. 
Each message $X_i \in \mathbb{F}$, where $\mathbb{F}$ is some finite field.
One user initially has $M$ side information messages and wants to download $N$ messages from the server, while the user does not want to reveal any information about the indices of the demand messages to the server.
We assume that the server only knows the number $M$ of side information messages of the user but has no idea about which messages the user has.

Let $\mathbf{W} = \{W_1,\dots,W_N\} \subseteq [K]$ denote the set of indices of the demand messages and $\mathbf{S} = \{S_1,\dots,S_M\} \subset [K] \setminus \mathbf{W}$ denote the set of side information messages. 
Let $\mathcal{S}$ and $\mathcal{W}$ denote the random variables corresponding to the indices of side information messages and demand messages.
We assume that $\mathcal{W}$ is uniformly distributed over all subsets of $[K]$ with size $N$, i.e., 
\begin{align}
	\Pr(\mathcal{W} = \mathbf{W}) = \frac{1}{\binom{K}{N}}&& \forall \mathbf{W} \subseteq [K], |\mathbf{W} |=N.
\end{align}
Moreover, $\mathcal{S}$ is also uniformly distributed over all subsets of $[K]\setminus \mathbf{W}$ with size $M$, i.e.,
\begin{align}
	\Pr(\mathcal{S} = \mathbf{S}| \mathbf{W}) = \frac{1}{\binom{K-N}{M}} && \forall \mathbf{S} \subseteq [K]\setminus \mathbf{W}, |\mathbf{S}| = M.
\end{align}

To retrieve the demand messages, the user sends a query $Q(\mathbf{W},\mathbf{S})$, which is determined by the indices of the demand messages and side information messages, to the server and the server replies coded messages according to the query.
In this paper, we only consider linear coding schemes.
Let $\mathbf{T} = \{T_1,\dots,T_R\}$ denote the linear coding scheme with $R$ transmissions to be sent to the user by the server.
Private information retrieval requires $\mathbf{T}$ to satisfy two conditions:
\begin{enumerate}
	\item \textbf{Retrieval Condition (Correctness)}: The user should be able to decode all demand messages from $\mathbf{T}$ by using its locally available side information messages, that is,
	\begin{align}
		H(X_{\mathbf{W}} | \mathbf{T}, X_{\mathbf{S}}) = 0\label{eq:retcon}.
	\end{align}
	\item \textbf{Privacy Condition}: The server should not be able to infer any information about the indices of the demand messages from the query, that is,
	\begin{align}
		I(\mathcal{W};Q(\mathbf{W},\mathbf{S})) = 0\label{eq:pricon}.
	\end{align}
\end{enumerate}
To satisfy the \textbf{Privacy Condition}, it is equivalent to have 
\begin{align}
	H(\mathcal{W}|Q(\mathbf{W},\mathbf{S})) = H(\mathcal{W})
\end{align}

\begin{definition}[\textbf{Coding subspace}]\label{def:codingsubspace}
	For any linear coding scheme $\mathbf{T} = \{T_1,\dots,T_R\}$, let $\text{supp}(T_i)$ denote the messages which are used to generate $T_i$.
	Define a partition of messages $\mathcal{P}(\mathbf{T}) = \{\wp_1,\dots\}$ such that for each $T_i$, there exists a unique $\wp_j \in \mathcal{P}$ such that $\text{supp}(T_i) \subseteq \wp_j$.
	We call the subspace spanned by each $\wp_j$ a coding subspace.
\end{definition}

For any linear coding scheme, it is possible to find its coding subspace(s).
The coding subspaces should jointly contain all messages, otherwise the non-included message cannot be the demand message, which violates the Privacy Condition.
The minimum number of required transmissions for single-server multi-message private information retrieval with side information can be computed by
\begin{align}
R^* = \min_{\mathcal{L} \in \Pi(K)} \sum_{i =1}^{|\mathcal{L}|} R(\mathcal{L}_i)\label{eq:r*}
\end{align}
where $\Pi(K)$ is the set of all partitions of $K$, $\mathcal{L} = \{\mathcal{L}_1,\dots,\mathcal{L}_{|\mathcal{L}|}\}$ and $R(\mathcal{L}_i)$ is the minimum number of required transmissions for coding subspace with size $\mathcal{L}_i$.

\begin{definition}[\textbf{MDS-Condition}]
	A linear coding scheme $\mathbf{T}$ satisfies the MDS-Condition in the coding subspace spanned by the messages in $\wp_i\in \mathcal{P}(\mathbf{T})$ if there exists a non-negative integer $M_i$ such that 
	given any $M_i$ messages in $\wp_i$, all other messages can be decoded and none of the messages can be decoded given any $M_i-1$ messages.
\end{definition}

\begin{example}\label{eg:1}
	Consider the SMPIRSI problem with the following setup: 
	$\mathbf{X} = \{X_1,X_2,\dots,X_{13}\}$, $\mathbf{W} = \{2,5\}$ and $\mathbf{S} = \{1,4,6,7,9\}$.
	We give the following linear coding scheme $\mathbf{T} = \{T_1,\dots,T_6\}$:
	\begin{align}
		T_1 &= X_1 + X_2+X_4+X_6+X_8\\
		T_2 &= X_1 + 2X_2+3X_4+4X_6+5X_8\\
		T_3 &= X_3+X_{10}+X_{11}+X_{13}\\
		T_4 &= X_3+2X_{10}+3X_{11}+4X_{13}\\
		T_5 &= X_5+X_{7}+X_{9}+X_{12}\\
		T_6 &= X_5+2X_{7}+3X_{9}+4X_{12}
	\end{align}
	
	From the user's perspective: $X_2$ can be decoded from $T_1$ and $T_2$ given that $X_1$, $X_4$ and $X_6$ are side information. $X_5$ can be decoded from $T_5$ and $T_6$ given that $X_7$ and $X_9$ are side information. Hence, the Retrieval Condition is satisfied.
	
	From the server's perspective: The linear coding scheme has $3$ coding subspaces: $\wp_1 = \{X_1,X_2,X_4,X_6,X_8\}$, $\wp_2 = \{X_3,X_{10},X_{11},X_{13}\}$ and $\wp_3 = \{X_5,X_7,X_9,X_{12}\}$.
	They satisfy the MDS-Condition with $m_1= 3$, $m_2=2$ and $m_3= 2$, respectively.
	Since the server only knows that the user has $5$ side information messages and wants to download $2$ messages,
	it can only infer that the demand messages are in either the same coding subspace or in $2$ different coding subspaces. 
	By using the randomized construction process shown in Section~\ref{sec:CodingScheme}, we can show that the probability for any two messages to be the demand messages is the same. 
	Hence, this coding scheme also satisfies the Privacy Condition.

	We will show that for this problem, we need at least $6$ transmissions and present the proof for the reason why it is optimal to partition messages into these three coding subspaces and why it is sufficient to have two transmissions in each coding subspace in Section~\ref{sec:main}.
\end{example}

\section{Main Result}\label{sec:main}
The main result of this paper is presented by the following theorem. We give a closed-form expression for the minimum number of required transmissions for SMPIRSI problem.

\begin{theorem}~\label{Thm:main}
	For the single-server multi-message private information retrieval with side information problem, the minimum number of required transmissions satisfies
	\begin{align}
		R^*(K,M,N)= K-M - (L^*-1-N)^{+}\bar{M}- ((L^*-N)V)^+\label{eq:main}
	\end{align}
	where $\bar{M} = \lfloor\frac{M}{N}\rfloor$, $t = M- N\bar{M}$, $L^* = \left\lceil\frac{K-t}{ \bar{M}+N}\right\rceil$ and $V = (K- (L^*-1)(\bar{M} +N)-t -N)/(L^*-N)$.
\end{theorem}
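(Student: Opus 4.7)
The plan is to reduce the theorem to the combinatorial optimization (\ref{eq:r*}) and solve it in closed form, matching the structure of the optimizer to a concrete Partition-and-MDS-Coding construction. First, I would analyze a single coding subspace $\wp_i$ of size $\ell_i$ in isolation. By Definition~\ref{def:codingsubspace} and the MDS-Condition, the minimum number of transmissions devoted to $\wp_i$ is $\ell_i - m_i$, where $m_i$ is the MDS parameter. The Privacy Condition forces $m_i \leq (\ell_i-N)^+$: a demand set $W$ lying entirely in $\wp_i$ has positive probability under uniform $\mathcal{W}$, and supporting it requires $S$ disjoint from $W$ with $|S \cap \wp_i| \geq m_i$, hence $\ell_i \geq m_i + N$. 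Subspaces with $\ell_i < N$ therefore degenerate to $m_i = 0$ and must contribute $\ell_i$ transmissions in the clear.

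Second, I would couple the subspaces through the user's side-information budget. Decoding requires that every \emph{active} subspace (one with $W \cap \wp_i \neq \emptyset$) contain at least $m_i$ side-information messages, so any realizable configuration satisfies $\sum_{i\,:\,W \cap \wp_i \neq \emptyset} m_i \leq M$. Because privacy forces every allocation $(w_1,\ldots,w_L)$ with $\sum_i w_i = N$ to be realizable, the binding case distributes one unit of $W$ into each of the $\min(L,N)$ subspaces with the largest $m_i$, giving the single global constraint that the sum of the top $\min(L,N)$ values among $m_1,\ldots,m_L$ is at most $M$. With this, (\ref{eq:r*}) rewrites as $R^* = K - \max \sum_{i=1}^L m_i$ subject to partitions of $K$ into parts $\ell_i$, integers $0 \leq m_i \leq (\ell_i-N)^+$, and the top-$\min(L,N)$ bound.

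Third, I would solve this maximization. When $L \leq N$ the top constraint collapses to $\sum m_i \leq M$ and one recovers the trivial bound $R^* \geq K-M$. When $L > N$ an exchange argument pins the optimum to $m_i \in \{\bar{M}, \bar{M}+1\}$, with exactly $t = M - N\bar{M}$ copies of $\bar{M}+1$ so that the top $N$ sum to $M$. The minimal compatible sizes are $\bar{M}+N$ and $\bar{M}+N+1$, and the feasibility requirement $L(\bar{M}+N)+t \leq K$ combined with maximizing $L$ yields $L^*$ subspaces; the residual $K - L^*(\bar{M}+N) - t$ is spread across the $L^*-N$ non-top subspaces at a rate $V$ per subspace. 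Substituting into $K-\sum m_i$ reproduces (\ref{eq:main}): $(L^*-1-N)^+\bar{M}$ is the contribution of non-top subspaces that reach the full MDS parameter $\bar{M}$, while $((L^*-N)V)^+$ captures the partial slack of the leftover subspace. Achievability is then supplied by the Partition-and-MDS-Coding construction: partition $[K]$ via a uniformly random permutation into groups of these sizes and place an independently drawn MDS generator matrix on each group. The permutation symmetry forces $I(\mathcal{W};Q)=0$, while MDS decoding enforces (\ref{eq:retcon}).

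The hard part is the third step. The exchange argument must rule out alternative $m_i$ distributions, and the case split on whether $L^* \leq N$ or $K-t$ is divisible by $\bar{M}+N$ is what generates the piecewise formula. In addition, one must verify that the top-$\min(L,N)$ lower bound is tight, i.e., that a genuine randomized linear scheme realizes every allocation with enough symmetry to satisfy (\ref{eq:pricon}); this is where the explicit Partition-and-MDS-Coding construction enters and closes the gap.
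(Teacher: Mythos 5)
Your proposal follows essentially the same route as the paper: reduce to the combinatorial optimization in~\eqref{eq:r*} via the MDS-Condition (Lemma~\ref{lm:mds}) and the per-subspace bound (Lemma~\ref{lm:tscs}), impose the top-$\min(L,N)$ budget constraint~\eqref{eq:totalsideinformation}, solve by an exchange argument to pin $m_i \in \{\bar{M},\bar{M}+1\}$ and sizes $\bar{M}+N$ or $\bar{M}+N+1$ (Lemmas~\ref{lm:vector4sideinformation} and~\ref{lm:optL}), and close achievability with Partition-and-MDS-Coding. The only imprecisions are bookkeeping: the feasibility constraint should read $(L-1)(\bar{M}+N)+t \leq K$, giving $L^*=\lceil (K-t)/(\bar{M}+N)\rceil$ rather than the floor of $(K-t)/(\bar{M}+N)$, and the residual is not spread over $L^*-N$ subspaces but sits entirely in $\wp_{L^*}$ (the $(L^*-N)$ in $V$'s denominator is an artifact that cancels so that $(L^*-N)V = m_{L^*}$), neither of which changes the argument.
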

\begin{remark}
	According to Theorem~\ref{Thm:main}, the minimum number of required transmissions is always upper bounded by $K-M$, which is consistent with the fact that there always exists a linear PIR coding scheme which is an MDS code with $K-M$ transmissions. This code has all messages in a single coding subspace. 
\end{remark}

In order to prove Theorem~\ref{Thm:main}, we first prove some useful lemmas. We also provide an alternative proof in Appendix.

\begin{lemma}\cite{Li1806:Single}\label{lm:mds}
	For any linear coding scheme that satisfies the Privacy Condition, without loss of optimality, the MDS-Condition should be satisfied in every coding subspace. 
\end{lemma}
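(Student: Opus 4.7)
My plan is to exploit the symmetry that the Privacy Condition forces on each coding subspace. Fix a scheme $\mathbf{T}$ satisfying \eqref{eq:pricon} with coding subspaces $\wp_1,\ldots,\wp_L$, and let $r_i$ denote the number of transmissions whose support lies in $\wp_i$, so $R=\sum_i r_i$. The first step is to show that for every $\wp_i$ there is a well-defined integer $M_i$ such that any $M_i$-subset of $\wp_i$ serves as valid side information for recovering the remaining messages in $\wp_i$, and no $(M_i-1)$-subset does. I would argue this by contradiction: suppose two $M_i$-subsets $A,B\subseteq\wp_i$ behave differently, with $A$ allowing full decoding and $B$ not. Then the demand/side-information configuration in which the user holds $B$ and demands something in $\wp_i\setminus B$ has zero probability of being the true realization given the query, while the analogous configuration with $A$ has positive probability. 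This contradicts the Privacy Condition, which forces every realization $(\mathbf{W},\mathbf{S})$ consistent with the sizes $(N,M)$ to have equal posterior probability given $Q(\mathbf{W},\mathbf{S})$.

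Next I would identify $M_i$ with $|\wp_i|-r_i$. The upper bound is immediate: since the transmissions in $\wp_i$ span a subspace of dimension at most $r_i$, one cannot recover $|\wp_i|$ unknowns from fewer than $|\wp_i|-r_i$ observed ones. For the matching lower bound, I would use the symmetry again: since every $M_i$-subset plays the same role, the generator matrix restricted to $\wp_i$ has the property that every $r_i$ columns are linearly independent, which is precisely the MDS property and implies decodability from any $|\wp_i|-r_i$ side-information messages.

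Having established the value of $M_i$ from the privacy structure alone, the last step is the ``without loss of optimality'' part: replace the $r_i$ transmissions in each $\wp_i$ by a standard $(|\wp_i|,M_i)$ MDS code over a sufficiently large field. This substitution (i) uses the same number $r_i$ of transmissions, (ii) satisfies the MDS-Condition by construction, (iii) preserves the Retrieval Condition because the user holds at least $M_i$ messages of $\wp_i$ whenever $\wp_i$ contains a demand index in the original scheme, and (iv) preserves the Privacy Condition because the replacement is symmetric over $\wp_i$ and therefore induces the same distribution over queries for all consistent $(\mathbf{W},\mathbf{S})$.

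The main obstacle I anticipate is the careful handling of step one: translating the entropy-level Privacy Condition \eqref{eq:pricon} into a combinatorial statement about subsets of $\wp_i$. One must argue that, conditioned on a query whose support lands in $\wp_i$, the posterior over $(\mathbf{W}\cap\wp_i,\mathbf{S}\cap\wp_i)$ is uniform over all valid splits, and that a split for which decoding fails would be distinguishable from one for which it succeeds, thereby breaking the uniform posterior. Once this combinatorial symmetry is in hand, the MDS identification and the code replacement are essentially linear-algebraic, and the overall reduction follows without increasing the transmission count.
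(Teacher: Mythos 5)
Your plan diverges from the paper's argument in a way that creates a genuine gap. The pivotal claim in your step one---that the Privacy Condition ``forces every realization $(\mathbf{W},\mathbf{S})$ consistent with the sizes $(N,M)$ to have equal posterior probability given $Q(\mathbf{W},\mathbf{S})$''---is not what Equation~\eqref{eq:pricon} says. The Privacy Condition only requires $I(\mathcal{W};Q)=0$, i.e.\ that the posterior on the \emph{demand indices} be uniform; the server is explicitly allowed to learn information about the \emph{side-information indices} $\mathcal{S}$. Hence the existence of two equal-sized subsets $A,B\subseteq\wp_i$ where $A$ permits decoding and $B$ does not is not, by itself, a privacy violation: the posterior on $\mathcal{W}$ can remain uniform even while the posterior on $\mathcal{S}$ is far from uniform, because the server cannot tell which $\mathbf{S}$ the user holds. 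This is precisely why the paper's own proof does \emph{not} try to derive the MDS structure as a forced consequence of privacy. Instead it explicitly allows asymmetric schemes, identifies the asymmetry (messages $X_i,X_j$ in the same subspace with $D(X_i)>D(X_j)$, or tied with unequal numbers of valid side-information sets) and then \emph{modifies} the offending transmission $T_k$---stripping off message overlaps so that either $X_j$'s decoding threshold rises or $X_j$ falls into a different coding subspace---while preserving privacy and the transmission count.

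The downstream consequence is that your step four, item (iii), is not justified as stated. You set $M_i=|\wp_i|-r_i$ and assert that whenever $\wp_i$ contains a demand index the user holds at least $M_i$ messages of $\wp_i$. That would be true if the $r_i$ transmissions were already MDS, but in a ``chained'' subspace (e.g.\ $T_1=X_1+X_2$, $T_2=X_2+X_3+X_4$ with $\wp=\{1,2,3,4\}$, $r=2$) the user can decode $X_1$ holding only $X_2$, one message short of $|\wp|-r=2$. Replacing this block wholesale with a $(4,2)$ MDS code would break the Retrieval Condition for that user. The correct fix---which is exactly what the paper's modification step achieves---is to first \emph{split} such an unbalanced subspace into smaller ones (here $\{1,2\}$ and $\{2,3,4\}$ would separate once the overlap on $X_2$ is removed from $T_2$), and only then apply MDS coding inside each balanced piece. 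Your high-level idea of normalizing to MDS codes per coding subspace is sound and arguably cleaner than the paper's somewhat informal modification argument, but it must be preceded by a subspace-splitting step rather than by deducing the balance from privacy alone.
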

\begin{proof}
	For a linear coding scheme $\mathbf{T} = \{T_1,\dots,T_R\}$, suppose there exists a coding subspace $\wp \in \mathcal{P}(\mathbf{T})$ such that the MDS-Condition is not satisfied in $\wp$.
	Let $D(\mathbf{T},X)$ denote the minimum number of side information messages which are required to decode message $X$ from linear coding scheme $\mathbf{T}$.
	Then, there must exists two messages $X_i$ and $X_j$ in the same coding subspace $\wp$ such that 
	\begin{itemize}
		\item[(i)] $D(X_i) > D(X_j)$.
		\item[(ii)] $D(X_i) = D(X_j)$, but there are more choices of side information messages for $X_j$ than $X_i$.
	\end{itemize}
	 For the first case, suppose $X_j$ can be decoded from $T_k$ given $D(\mathbf{T},X_i)$ as side information. If $T_k$ is not the transmission that will be used by the user to decode the demand messages, then it is possible to remove messages in $\text{supp}(T_k)$ which are also in $\text{supp}(T_l)$ $\forall l \ne k$.
	 As a result, we either $X_j$ cannot be decoded from $T_k$ or $X_j$ is in another coding subspace.
	 For the second case, we can do similar operation on the transmissions which can be used to decode $X_j$ given $D(\mathbf{T},X_j)$ as side information.
	 In both cases, If the original coding scheme satisfies Privacy condition, the modified coding scheme also satisfies Privacy condition and uses the same number of transmissions.
	 Hence, it is optimal to only consider the linear coding schemes which satisfy MDS-Condition in every coding subspace as the candidate PIR coding scheme.
\end{proof}


According to Lemma~\ref{lm:mds}, it is without loss of optimality to restrict attention to linear coding schemes which satisfy the MDS-Condition in every coding subspace. 
For such coding schemes, the minimum number of required transmissions in each coding subspace should satisfy the condition stated in the following Lemma.

\begin{lemma}\label{lm:tscs}
	For coding subspace $\wp$, the number of transmissions in this coding subspace, $R(\wp)$, satisfies
	\begin{align}
	R(\wp,m(\wp)) \left\{ \begin{aligned}
	&=|\wp|,  & |\wp| \le N\\
	&\ge |\wp|-m(\wp), & N+1\le |\wp|\le K
	\end{aligned}
	\right.		
	\end{align} 
	where $m(\wp) \in \{0,1,\dots, |\wp|-N \}$ is the number of side information messages that are used in such coding subspace.
\end{lemma}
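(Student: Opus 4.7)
The plan is to invoke Lemma~\ref{lm:mds} to restrict attention to coding schemes satisfying the MDS-Condition in every coding subspace; write $M_\wp$ for the MDS parameter of subspace $\wp$. A standard fact about MDS codes is that the transmissions inside $\wp$ then span a subspace of linear dimension $|\wp|-M_\wp$, so any valid scheme obeys $R(\wp)\ge |\wp|-M_\wp$, with equality attainable by keeping only a basis of transmissions. The lemma then reduces to bounding $M_\wp$ in each of the two regimes.

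For the first case $|\wp|\le N$, I would invoke the Privacy Condition~\eqref{eq:pricon}. Since $|\wp|\le N$, one can extend $\wp$ to a valid demand set $\mathbf{W}'\supseteq\wp$ of size $N$ by drawing the remaining $N-|\wp|$ indices from the other subspaces. For such a $\mathbf{W}'$ no side-information message lies inside $\wp$, so the user must decode every message of $\wp$ from transmissions supported entirely in $\wp$. The MDS-Condition then forces $M_\wp=0$, yielding $R(\wp)=|\wp|$. The delicate step is justifying that the scheme really must support this particular $\mathbf{W}'$: if it did not, then $\Pr(\mathcal{W}=\mathbf{W}'\mid Q=q)$ would differ from $\Pr(\mathcal{W}=\mathbf{W}')$ for some realized query $q$, contradicting $I(\mathcal{W};Q)=0$.

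For the second case $N+1\le |\wp|\le K$, I would combine the MDS-Condition with the Retrieval Condition~\eqref{eq:retcon}. A demand message lying in $\wp$ is decodable only if the user already knows at least $M_\wp$ other messages of $\wp$, and these must be drawn from the $m(\wp)$ side-information messages in $\wp$; hence $M_\wp\le m(\wp)$, and together with $R(\wp)\ge|\wp|-M_\wp$ this yields $R(\wp)\ge|\wp|-m(\wp)$. The upper bound $m(\wp)\le |\wp|-N$ on the admissible range follows once more from privacy: since any $N$-subset of $[K]$ is a plausible demand from the server's viewpoint, $\wp$ must be able to house up to $N$ demand indices disjoint from its $m(\wp)$ side-information indices, forcing $|\wp|-m(\wp)\ge N$.

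I expect the main obstacle to lie in the first case, specifically in converting the entropy-level Privacy Condition into the combinatorial statement that every demand configuration compatible with the coding-subspace partition must actually be realizable by the scheme. Once this bridge from $I(\mathcal{W};Q)=0$ to ``the extended $\mathbf{W}'$ must satisfy the Retrieval Condition'' is in place, the MDS-Condition delivers both bounds almost mechanically.
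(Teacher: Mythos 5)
Your proposal is correct and follows essentially the same approach as the paper's proof: invoke Lemma~\ref{lm:mds}, observe that when $|\wp|\le N$ all of $\wp$ can simultaneously be demand messages (forcing the MDS parameter to zero and hence $R(\wp)=|\wp|$), and when $|\wp|>N$ the MDS-Condition gives $R(\wp)\ge|\wp|-m(\wp)$ with the range constraint $m(\wp)\le|\wp|-N$ because all $N$ demands could fall in $\wp$. The paper simply asserts these combinatorial observations directly rather than deriving them from the entropy form of the Privacy Condition, but the underlying logic is identical.
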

\begin{proof}
	If the coding subspace $\wp$ includes no more than $N$ messages, since all the messages in $\wp$ can be the demand message, the transmissions in such coding subspace should be equal to the dimension of the coding subspace, which is $|\wp|$.
	If the coding subspace $\wp$ includes more than $N$ messages, then it must satisfy the MDS-Condition, according to Lemma~\ref{lm:mds}.
	Hence, by using $m(\wp)$ side information messages, the number of transmissions in such coding subspace should be $|\wp|-m(\wp)$.
	Additionally, since it is possible that all $N$ demand messages are in $\wp$, the number of transmissions in $\wp$ is at least $N$.
	Thus, the maximum number of side information messages that can be used in this coding subspace is $|\wp|- N$.
\end{proof}

\begin{example*}
	It can be verified that the coding scheme proposed in Example~\ref{eg:1}, $\mathbf{T}=\{T_1,\dots,T_6\}$, satisfies MDS-Condition in all three coding subspaces, with $m_1 = 3$, $m_2= 2$ and $m_3= 2,$ respectively.
\end{example*}

Let $\mathcal{P} = \{\wp_1,\dots,\wp_L\}$ denote the coding subspaces. Without loss of generality, we assume that $|\wp_1|\ge\dots,|\wp_L|$.
The minimum number of required transmissions of any linear PIR scheme based on such coding subspaces can be computed as
\begin{align}
	R(\mathcal{P}) = \min_{\mathbf{m}}\sum_{i = 1}^{L} R(\wp_i,m_i) 
\end{align}
where $\mathbf{m} = \{m_1,\dots,m_L\}$ is the vector of the number of side information messages used in each coding subspace.
It is easy to see that a coding subspace of larger size should have no fewer side information messages, i.e., $m_1\ge \dots\ge m_L$.
Since the total number of side information messages is $M$, the feasible side information vector should satisfy
\begin{align}
	\sum_{i =1}^{\min\{L,N\}} m_i = M\label{eq:totalsideinformation}
\end{align}
The reason why the summation is taken only from $1$ to $\min\{L,N\}$ is that when $L>N$, the number of coding subspaces which contain demand messages is at most $N$.
If the first $m_i$'s sums up to $M$, every subset of $m_i$'s with size $N$ has sum no larger than $M$.
Hence, Eqn.~\eqref{eq:totalsideinformation} guarantees that the total number of side information messages used by any $N$ coding subspaces is not larger than $M$. 
Additionally, if the size of one coding subspace is equal to or less than $N$, the number of side information messages that can be used in such coding subspace can only be zero.

The optimization problem~\eqref{eq:r*} for the minimum number of required transmissions can be expressed as follows by optimizing over side information vectors:
\begin{align}
	R^* &= \min_{\mathcal{L} \in \Pi(K)} \sum_{i =1}^{|\mathcal{L}|} R(\mathcal{L}_i) \\
	&= \min_{\mathcal{L} \in \Pi(K)} \min_{\mathbf{m} \in \mathbf{P}(M)} \sum_{i =1}^{|\mathcal{L}|} R(\mathcal{L}_i,m_i)\\
	&= \min_{\mathcal{L} \in \Pi(K)} \min_{\mathbf{m} \in \mathbf{P}(M)} \sum_{i =1}^{|\mathcal{L}|} \mathcal{L}_i-m_i\\
	& = K-\max_{\mathcal{L} \in \Pi(K)} \max_{\mathbf{m} \in \mathbf{P}(M)} \sum_{i =1}^{|\mathcal{L}|}m_i
\end{align} 
where $\mathcal{L}=\{\mathcal{L}\_1,\dots,\mathcal{L}_{|\mathcal{L}|}\}$ is a partition of the integer $K$ which satisfies $\forall i \in [|\mathcal{L}|]: \mathcal{L}_i>N$ and $\mathbf{m}=\{m_1,\dots,m_{|\mathcal{L}|}\}$ is the vector of the number of side information messages used in each coding subspace which satisfies Eqn.~\eqref{eq:totalsideinformation} and $m_i \le (\mathcal{L}_i-N)^+$.
Since $\mathcal{L}$ is a partition of $K$, it is always true that the summation $\sum_{i =1}^{|\mathcal{L}|} L_i = K$.
Therefore, the optimization problem becomes find the optimal partition ($\mathcal{L}$) and optimal side information vector $\mathbf{m}$ such that $\sum_{i =1}^{|\mathcal{L}|}m_i$ is maximized.


\begin{lemma}\label{lm:sizelessthanN}
	For any linear PIR coding scheme with fewer than $N$ coding subspaces, the minimum number of required transmissions is always $K-M$.
\end{lemma}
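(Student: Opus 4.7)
The plan is to establish matching upper and lower bounds, both equal to $K-M$. The upper bound is immediate from the trivial MDS scheme noted in the remark after Theorem~\ref{Thm:main}: placing all $K$ messages into a single coding subspace (so $L=1<N$) with MDS parameter $M$ uses $K-M$ transmissions and satisfies both the Retrieval and Privacy Conditions, which already shows that the minimum over schemes with fewer than $N$ coding subspaces is at most $K-M$.

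For the lower bound, I would fix an arbitrary scheme with coding subspaces $\wp_1,\dots,\wp_L$ where $L<N$. By Lemma~\ref{lm:mds} I may assume the MDS-Condition holds in each $\wp_i$ with parameter $m_i$ (with $m_i=0$ when $|\wp_i|\le N$), and Lemma~\ref{lm:tscs} then writes the total number of transmissions as $K-\sum_{i=1}^L m_i$. The task reduces to proving $\sum_{i=1}^L m_i \le M$. The key privacy step is that since $L<N$, there exists a feasible demand profile $(d_1,\dots,d_L)$ with $d_i\ge 1$ for every $i$ and $\sum_i d_i = N$ (for instance $d_i=1$ for $i<L$ and $d_L=N-L+1$); the Privacy Condition forces the scheme to support this profile together with the compatible choices of $\mathbf{S}$. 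For each $i$, decoding a demand message from $\wp_i$ requires the user to know at least $m_i$ messages of $\wp_i$ as side information, because only $|\wp_i|-m_i$ transmissions involve $\wp_i$ and the MDS-Condition leaves $m_i$ undetermined dimensions otherwise. Summing $|\mathbf{S}\cap\wp_i|\ge m_i$ over $i$ and invoking the fact that the coding subspaces partition the $K$ messages yields $M=\sum_i|\mathbf{S}\cap\wp_i|\ge\sum_i m_i$, as desired.

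The hard part will be pinning down the privacy step, specifically ruling out that decoding within $\wp_i$ could be aided by transmissions or side information outside of $\wp_i$. This is handled by Definition~\ref{def:codingsubspace}: each transmission lies inside a single coding subspace, so the linear decoding system for $\wp_i$ decouples from the rest, and the per-subspace MDS threshold $m_i$ is genuinely binding on $|\mathbf{S}\cap\wp_i|$ whenever $\wp_i$ contains a demand message. Combining the two bounds then gives the claimed value $K-M$.
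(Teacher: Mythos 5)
Your proof is correct and follows essentially the same route as the paper's (much terser) argument: both hinge on the observation that with $L \le N$ coding subspaces, privacy forces the scheme to accommodate a demand profile placing at least one demand message in every subspace, so the per-subspace side-information thresholds must satisfy $\sum_i m_i \le M$, giving $R \ge K - \sum_i m_i \ge K-M$, which the trivial one-subspace MDS scheme achieves. You simply make explicit the demand-profile construction and the decoupling argument (via Definition~\ref{def:codingsubspace}) that the paper's proof leaves implicit.
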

\begin{proof}
	For any partition $\mathcal{L} = \{\mathcal{L}_1,\dots,\mathcal{L}_L \}$, if $L \le N$, it is possible that every coding subspace contains at least one demand messages.
	Hence, the number of side information messages used in all coding subspace must sum up to $M$.
	Thus, we have
	\begin{align}
	R^*(\mathcal{L}) = \sum_{i =1}^{L} R(\mathcal{L}_i,m_i) = \sum_{i=1}^{L} (\mathcal{L}_i - m_i) = \sum_{i=1}^{L} \mathcal{L}_i - \sum_{i=1}^{L} m_i = K- M
	\end{align}
\end{proof}

\begin{lemma}\label{lm:sizelargerthanN}
	For any linear PIR coding scheme based on a partition $\mathcal{L} = \{\mathcal{L}_1,\dots,\mathcal{L}_L \}$ with $L >N$ and $\mathcal{L}_1\ge \dots\ge \mathcal{L}_L\ge N$, the minimum number of transmissions satisfies
	\begin{align}
	R(\mathcal{L})^* &= \min_{\mathbf{m}}\sum_{i=1}^{L} R(\mathcal{L}_i) =K-\max_{\mathbf{m}} \sum_{i=1}^{L}m_i = K-M -\max_{\mathbf{m}}\sum_{i=N+1}^{L}m_i
	\end{align}
	where $\mathbf{m} = \{m_1,\dots,m_L\}$ is the vector of the number of side information messages in each coding subspace.
\end{lemma}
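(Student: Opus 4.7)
The plan is to reduce the claim to algebra by combining Lemma~\ref{lm:tscs} with the side-information budget constraint stated in Eqn~\eqref{eq:totalsideinformation}. First, I would invoke Lemma~\ref{lm:tscs}: under the hypothesis $\mathcal{L}_i \ge N$ for all $i$, the number of transmissions contributed by coding subspace $i$ satisfies $R(\mathcal{L}_i, m_i) = \mathcal{L}_i - m_i$. (The borderline case $\mathcal{L}_i = N$ is consistent, since Lemma~\ref{lm:tscs} forces $m_i = 0$ in that case, and the formula still returns $\mathcal{L}_i$.) Summing this identity over $i = 1, \dots, L$ and using that $\mathcal{L}$ is a partition of $K$, so $\sum_{i=1}^L \mathcal{L}_i = K$, yields
\begin{align*}
\sum_{i=1}^L R(\mathcal{L}_i, m_i) \;=\; K - \sum_{i=1}^L m_i,
\end{align*}
and minimizing the left-hand side over feasible $\mathbf{m}$ is equivalent to maximizing $\sum_{i=1}^L m_i$. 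This establishes the first chain of equalities in the lemma.

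For the remaining equality, I would then invoke the privacy-induced budget relation in Eqn~\eqref{eq:totalsideinformation}. Since the hypothesis gives $L > N$, we have $\min\{L, N\} = N$, so the $N$ largest components of any feasible $\mathbf{m}$ sum to exactly $M$. Splitting the total sum at index $N$ and substituting then gives
\begin{align*}
\sum_{i=1}^L m_i \;=\; \sum_{i=1}^N m_i + \sum_{i=N+1}^L m_i \;=\; M + \sum_{i=N+1}^L m_i,
\end{align*}
and taking the maximum of both sides completes the derivation.

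The only subtlety worth checking is the identification $R(\mathcal{L}_i, m_i) = \mathcal{L}_i - m_i$ rather than merely $\ge$: Lemma~\ref{lm:tscs} furnishes only the lower bound, but Lemma~\ref{lm:mds} ensures it is tight at optimum, since without loss of optimality we may restrict to coding schemes satisfying the MDS-Condition in every coding subspace, after which $|\mathcal{L}_i| - m_i$ transmissions both suffice and are necessary. Given this, the lemma itself is a bookkeeping exercise; the genuine work for Theorem~\ref{Thm:main} will arise in the subsequent step, where one must optimize jointly over the partition $\mathcal{L}$ and the vector $\mathbf{m}$ (subject to $m_1 \ge \dots \ge m_L$ and $m_i \le (\mathcal{L}_i - N)^+$) to evaluate $\max_{\mathbf{m}} \sum_{i=N+1}^L m_i$ in closed form and recover the expression involving $\bar{M}$, $L^*$, and $V$.
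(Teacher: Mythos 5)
Your proof is correct and follows the same approach the paper implicitly uses: it combines Lemma~\ref{lm:mds} and Lemma~\ref{lm:tscs} to get $R(\mathcal{L}_i,m_i)=\mathcal{L}_i-m_i$ at optimum, sums over the partition, and then splits the sum at index $N$ using Eqn.~\eqref{eq:totalsideinformation} with $\min\{L,N\}=N$. The paper states this lemma without a dedicated proof, and your write-up supplies exactly the bookkeeping it leaves implicit, including the correct handling of the achievability (not just lower-bound) direction via the MDS-Condition.
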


Hence, when the number of coding subspaces is larger than $N$,
the number of side information messages that can be used in the coding subspace corresponding to $\mathcal{L}_{N+1},\dots,\mathcal{L}_L$ should be maximized.
Recall that we assume $m_1\ge m_2\ge \dots\ge m_L$, hence, we actually need to maximize $m_N$. 
Since vector $\mathbf{m}=\{m_1,\dots,m_L\}$ satisfies Eqn.~\eqref{eq:totalsideinformation}, we have $m_N \le \lfloor\frac{M}{N}\rfloor$.
Moreover, for coding subspaces with size $\mathcal{L}_i$ such that $N<\mathcal{L}_i<N+\lfloor \frac{M}{N}\rfloor$, the number of side information messages used in such a coding subspace satisfies $m_i \le \mathcal{L}_i-N$, since the number of transmissions used in such a coding subspace is at least $N$.

\begin{lemma}\label{lm:vector4sideinformation}
	Let $\mathbf{m}=\{m_1,\dots,m_L\}$ denote the number of side information messages used in each coding subspace, where $L > N$ and $m_1\ge \dots\ge m_L$.
	Let $\mathcal{L}_i$ denote the size of the $i$-th coding subspace and $t = M- N\lfloor \frac{M}{N} \rfloor$.
	The optimal choice for $\mathbf{m}$ is 
	\begin{align}
		\forall i \in [t]:\ \  &m_i =  \min \{\lfloor \frac{M}{N} \rfloor+1, (\mathcal{L}_i-N)^+\}\\
		\forall i \in \{t+1,\dots,L\}:\ \ &m_i = \min \{\lfloor \frac{M}{N} \rfloor, (\mathcal{L}_i-N)^+\}
	\end{align}
\end{lemma}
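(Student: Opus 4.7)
The plan is to recast Lemma~\ref{lm:vector4sideinformation} as a small monotone linear program and solve it by a per-coordinate upper bound followed by an explicit achievability check. By Lemma~\ref{lm:sizelargerthanN}, minimizing $R$ for a fixed partition is equivalent to maximizing $\sum_{i=1}^{L} m_i$ subject to three constraints: monotonicity $m_1\ge \cdots\ge m_L\ge 0$ (which is without loss of generality since the $\mathcal{L}_i$ and hence the caps $c_i:=(\mathcal{L}_i-N)^+$ are sorted in decreasing order), the per-subspace caps $m_i\le c_i$ from Lemma~\ref{lm:tscs}, and the aggregate inequality $\sum_{i=1}^{\min\{L,N\}}m_i\le M$ obtained from Eqn.~\eqref{eq:totalsideinformation}.

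The first step is to establish pointwise upper bounds on each $m_i$. Monotonicity together with the aggregate constraint yields $N m_N\le \sum_{i=1}^{N}m_i\le M$, so $m_N\le \bar{m}:=\lfloor M/N\rfloor$, and by monotonicity $m_i\le \bar{m}$ for every $i>N$ as well. A short pigeonhole/exchange argument then shows that at most $t=M-N\bar{m}$ of the first $N$ coordinates can strictly exceed $\bar{m}$: if the $i$-th coordinate satisfies $m_i\ge \bar{m}+1$ for some $i$, monotonicity forces $m_1,\dots,m_i\ge \bar{m}+1$, and the aggregate bound $\sum_{j=1}^{N}m_j\le N\bar{m}+t$ then forces $i\le t$. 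Incorporating the per-subspace caps gives
\begin{align}
m_i\le \min\{\bar{m}+1,\,c_i\}\ \ \text{for } i\le t,\qquad m_i\le \min\{\bar{m},\,c_i\}\ \ \text{for } i>t,
\end{align}
exactly matching the values stated in the lemma.

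The second step is achievability: verify that the $\mathbf{m}$ defined in the lemma is feasible, after which saturation of every pointwise bound implies optimality of the sum. Monotonicity is preserved because both the thresholds ($\bar{m}+1$ for $i\le t$, then $\bar{m}$) and the caps $c_i$ are nonincreasing in $i$; the caps $m_i\le c_i$ are built into the outer minimum; and the aggregate constraint follows by direct computation, $\sum_{i=1}^{N}m_i\le t(\bar{m}+1)+(N-t)\bar{m}=N\bar{m}+t=M$. Since each coordinate meets its pointwise ceiling, the total $\sum_{i=1}^L m_i$ is maximized.

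The main obstacle I foresee is the pigeonhole/exchange step in the first paragraph above, since naively one might only conclude $m_i\le \bar{m}+1$ for $i=1$; the subtlety is to propagate this bound from individual coordinates to the top-$t$ block using monotonicity in both directions. A related delicate point is checking that at the boundary $i=t$ the formula remains monotone even when caps bind ($c_t$ small): here one uses that $c_t\ge c_{t+1}$ together with the fact that the two thresholds $\bar{m}+1$ and $\bar{m}$ differ by only one, so $\min\{\bar{m}+1,c_t\}\ge \min\{\bar{m},c_{t+1}\}$ always holds. These observations combined with the per-coordinate optimality argument then yield the lemma.
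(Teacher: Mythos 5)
The pigeonhole step in your first paragraph is incorrect, and this is a genuine gap. From $m_i\ge \bar{m}+1$ and monotonicity you get $m_1,\dots,m_i\ge\bar m+1$, but the coordinates $m_{i+1},\dots,m_N$ have no lower bound other than $0$; therefore the aggregate constraint only gives $i(\bar m +1)\le N\bar m+t$, i.e.\ $i\le t+(N-i)\bar m$, which does \emph{not} force $i\le t$ whenever $\bar m\ge 1$. A concrete counterexample: take $N=3$, $M=7$, so $\bar m=2$, $t=1$, and let all caps be large. Then $\mathbf m=(3,3,1,1,\dots)$ and even $\mathbf m=(7,0,0,0,\dots)$ are feasible, violating the claimed ceilings $m_1\le 3$, $m_2\le 2$. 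Consequently your conclusion ``since each coordinate meets its pointwise ceiling, the total is maximized'' does not follow, because the ceilings you are saturating are not valid constraints on the feasible set.

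The repair, which is what the paper's discussion immediately preceding the lemma actually uses (the paper states this lemma without an explicit proof), is to notice that the head contributes a fixed amount: since $\sum_{i=1}^{N}m_i=M$ for $L>N$, the objective is $\sum_{i=1}^{L}m_i=M+\sum_{i=N+1}^{L}m_i$, so one must maximize the tail. Monotonicity caps every tail coordinate by $m_N$, and the pigeonhole bound $Nm_N\le M$ (which \emph{is} valid) gives $m_N\le\bar m$; combining with the cap $c_N$ yields $m_N\le\min\{\bar m,c_N\}$, and then $m_i=\min\{\bar m,c_i\}$ for $i>N$ is both an upper bound and achievable. The specific head values $m_1=\bar m+1$ ($i\le t$), $m_i=\bar m$ ($t<i<N$) are merely one feasible way to make $\sum_{i=1}^{N}m_i=M$ with $m_{N-1}\ge\cdots\ge m_1$; they are not forced by an upper bound, and the lemma picks this particular head because it minimizes the partition sizes needed, which is used in Lemma~\ref{lm:optL}. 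So your achievability paragraph is fine, but the ``per-coordinate upper bound'' framing must be replaced by the ``fixed head sum, maximize $m_N$ and hence the tail'' argument.
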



\begin{lemma}\label{lm:optL}
	The optimal partitions, $\{\mathcal{L}_1,\dots,\mathcal{L}_L\}$ satisfy
	\begin{align}
		\forall i \in [t]:\ \  &\mathcal{L}_i =  \lfloor \frac{M}{N} \rfloor+N + 1\label{eq:L1}\\
		\forall i \in \{t+1,\dots,L-1\}:\ \ &\mathcal{L}_i =  \lfloor \frac{M}{N} \rfloor+N \label{eq:L2}\\
		\forall i = L:\ \   &\mathcal{L}_L = K- (L-1)(\lfloor \frac{M}{N} \rfloor +N)-t\label{eq:L3}
	\end{align}
	where $t = M- N\lfloor \frac{M}{N} \rfloor$. 
\end{lemma}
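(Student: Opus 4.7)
The plan is to use an exchange argument grounded in the preceding lemmas. By Lemma~\ref{lm:sizelargerthanN}, minimizing $R(\mathcal{L})$ over partitions with $L>N$ coding subspaces is equivalent to maximizing $\sum_{i=1}^L m_i$, and by Lemma~\ref{lm:vector4sideinformation} the best achievable total is $\sum_{i=1}^t \min\{\bar{M}+1,(\mathcal{L}_i-N)^+\} + \sum_{i=t+1}^L \min\{\bar{M},(\mathcal{L}_i-N)^+\}$, where $\bar{M}=\lfloor M/N\rfloor$ and $t=M-N\bar{M}$. So the whole lemma reduces to optimizing a piecewise-linear objective over integer partitions of $K$.

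First I would isolate the saturation structure of each summand. The $i$-th term reaches its cap, $\bar{M}+1$ for $i\le t$ or $\bar{M}$ for $t<i\le L$, exactly once $\mathcal{L}_i\ge N+\bar{M}+1$ or $\mathcal{L}_i\ge N+\bar{M}$ respectively; beyond that threshold, additional messages in $\wp_i$ are wasted because they increase $\mathcal{L}_i$ without contributing to $m_i$, while the global constraint $\sum_i\mathcal{L}_i=K$ is binding and must be respected.

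Next I would run an exchange argument to drive all indices $i<L$ to the prescribed sizes in~\eqref{eq:L1}--\eqref{eq:L2}. If some $\mathcal{L}_j$ with $j<L$ strictly exceeds its saturation threshold, I transfer one message out of $\wp_j$ into $\wp_L$, which leaves $m_j$ unchanged and can only increase $m_L$. Iterating shrinks oversized subspaces and grows $\mathcal{L}_L$ until every $i<L$ meets its threshold with equality, forcing $\mathcal{L}_L$ to equal the remainder $K-(L-1)(\bar{M}+N)-t$ as in~\eqref{eq:L3}. Conversely, if some $\mathcal{L}_j$ with $j<L$ falls short of its threshold, moving a message from $\wp_L$ into $\wp_j$ strictly increases $m_j$ while decreasing $m_L$ by at most one, yielding a net gain whenever $\mathcal{L}_L\ge N+\bar{M}+1$; the residual case in which $\mathcal{L}_L$ is already too small to donate profitably is precisely the regime where the partition is tight and cannot be improved by a within-$L$ reshuffle.

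The main obstacle will be the asymmetric role of $\mathcal{L}_L$, which simultaneously absorbs excess (producing the uniform upper sizes) and contributes its own $m_L$. I would handle this by applying the exchange argument only across indices $i<L$ and then invoking monotonicity of $\min\{\bar{M},(x-N)^+\}$ in $x$ to confirm that the leftover value~\eqref{eq:L3} for $\mathcal{L}_L$ cannot be further improved by reshuffling within the same $L$: any attempt to split $\wp_L$ into additional saturated subspaces would correspond to a different value of $L$ and is therefore the subject of the outer minimization yielding $L^*$ in Theorem~\ref{Thm:main}. Edge cases to address include $L=N+1$, the situation where $\mathcal{L}_L$ drops below $N$ so that $m_L=0$ trivially, and integrality of $t$, all of which follow from bookkeeping on $\bar{M}$ and $t$ once the exchange argument is in place.
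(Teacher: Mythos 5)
Your proposal is essentially the same approach as the paper's proof: both reduce the problem via Lemma~\ref{lm:vector4sideinformation} to maximizing $\sum_i m_i = \sum_{i\le t}\min\{\bar{M}+1,(\mathcal{L}_i-N)^+\} + \sum_{i>t}\min\{\bar{M},(\mathcal{L}_i-N)^+\}$ over integer partitions, observe that the first $L-1$ subspaces should saturate their caps, and let $\mathcal{L}_L$ absorb the remainder (with the case $\mathcal{L}_L > \bar{M}+N$ handled by refusing a partition and deferring to the outer choice of $L$). Where the paper simply asserts that hitting the thresholds is ``sufficiently optimal,'' you supply an explicit exchange argument, which is a welcome tightening.

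One small imprecision worth fixing: in your second exchange step you gate the strict improvement on $\mathcal{L}_L \ge N+\bar{M}+1$, but under the standing ordering $\mathcal{L}_1 \ge \dots \ge \mathcal{L}_L$, if some $\mathcal{L}_j$ with $j<L$ falls short of its threshold then $\mathcal{L}_L \le \mathcal{L}_j$ must also fall short, so that condition is vacuous. What actually happens is that the move from $\wp_L$ to $\wp_j$ is never worse (net change $0$ when $\mathcal{L}_L>N$, net $+1$ when $\mathcal{L}_L\le N$), which is enough to conclude that the prescribed partition is \emph{without loss of optimality} rather than strictly better. Stating the exchange step as ``never hurts'' instead of ``gains when $\mathcal{L}_L$ is large'' makes the argument clean and consistent with the ordering constraint.
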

\begin{proof}
	According to Lemma~\ref{lm:vector4sideinformation},  in order to maximize the number of side information messages in each coding subspace, it is sufficiently optimal to have
	\begin{align}
		\forall i \in [t]:\ \  &\mathcal{L}_i =  \lfloor \frac{M}{N} \rfloor+1+N\\
		\forall i \in \{t+1,\dots,L-1\}:\ \ &\mathcal{L}_i =  \lfloor \frac{M}{N} \rfloor+N
	\end{align}
	Then, the total size of the first $L-1$ coding subspace is
	\begin{align}
		\sum_{i=1}^{L-1} \mathcal{L}_i = t(\lfloor \frac{M}{N} \rfloor+1+N) + (L-1-t)(\lfloor \frac{M}{N} \rfloor+N) = (L-1)(\lfloor \frac{M}{N} \rfloor +N)+t\label{eq:sumL}
	\end{align}
	Then the size of the last coding subspace can only be $q = K- (L-1)(\lfloor \frac{M}{N} \rfloor +N)-t$.
	If $q > \lfloor \frac{M}{N} \rfloor+N$, then we can further decompose the last coding subspace into two smaller coding subspaces with size $\lfloor \frac{M}{N} \rfloor+N$ and $q- \lfloor \frac{M}{N} \rfloor+N$.
	Thus, we can assume that $L$ is large enough such that $q < \lfloor \frac{M}{N} \rfloor+N$.
	If $N < q < \lfloor \frac{M}{N} \rfloor+N$, the number of side information messages in this coding subspace is equal to $q-N$.
	If $q\le N$, the number of transmissions required for the last coding subspace is equal to its size and hence the number of side information messages is zero.
\end{proof}

Now we are ready to prove Theorem~\ref{Thm:main}.
\begin{proof}(Theorem~\ref{Thm:main})
	As has been shown in the proof of Lemma~\ref{lm:optL}, Eqn.~\eqref{eq:sumL}, for the optimal partition, $\{\mathcal{L}_1,\dots,\mathcal{L}_L\}$, we have $\sum_{i=1}^{L-1} \mathcal{L}_i = (L-1)(\lfloor \frac{M}{N} \rfloor +N)+t$.
	Since the total number of messages is $K$, we have 
	\begin{align}
		K\ge  \sum_{i=1}^{L-1} \mathcal{L}_i = (L-1)(\lfloor \frac{M}{N} \rfloor +N)+t 
	\end{align}
	Additionally, $L$ is an integer, we have the optimal number of coding subspaces is
	\begin{align}
		L^* = \left\lceil\frac{K-t}{\lfloor \frac{M}{N} \rfloor +N}\right\rceil 
		\label{eq:L*}
	\end{align}
	The side information vector $\mathbf{m} = \{m_1,\dots,m_L\}$ satisfies
	\begin{align}
	\forall i \in [t]:\ \  &m_i = \lfloor \frac{M}{N} \rfloor+1\\
	\forall i \in \{t+1,\dots,L^*-1\}:\ \ &m_i = \lfloor \frac{M}{N} \rfloor\\
	\forall i= L^*:\ \ &m_L = (K- (L^*-1)(\lfloor \frac{M}{N} \rfloor +N)-t -N)^+\label{eq:ml}
\end{align}
	Hence the total number of required transmissions is
	\begin{align}
		R^* &= K-\max_{\mathcal{L} \in \Pi(K)} \max_{\mathbf{m}} \sum_{i =1}^{|\mathcal{L}|}m_i\\
		& = K-M -\max_{\mathbf{m}}\sum_{i=N+1}^{L^*}m_i\\
		& = \left\{
		\begin{aligned}
		&K-M &\text{if } L^*\le N\\
		&K-M-((L^*-N)V)^+ &\text{if } L^*= N+1\\
		&K-M - (L^*-1-N)^{+}\bar{M}-((L^*-N)V)^+ &\text{if } L^*> N+1\\
		\end{aligned}
		\right.\\
		& = K-M - (L^*-1-N)^{+}\bar{M}- ((L^*-N)V)^+
	\end{align}
	where $\bar{M} = \lfloor\frac{M}{N}\rfloor$, $t = M- N\bar{M}$, $L^* = \left\lceil\frac{K-t}{ \bar{M}+N}\right\rceil$ and $V = (K- (L^*-1)(\bar{M} +N)-t -N)/(L^*-N)$.

\end{proof}

\begin{example*}
	For Example~\ref{eg:1}, we have $K=13$, $M=5$ and $N=2$. According to Theorem~\ref{Thm:main}, it is easy to get that $\bar{M} = \lfloor\frac{M}{N}\rfloor = 2$, $t = M- N\bar{M}= 1$ and $L^* = \left\lceil\frac{K-t}{ \bar{M}+N}\right\rceil =3$. The minimum number of required transmissions is $R^* = 6$.
\end{example*}

\begin{theorem}
	 For given total number of messages $K$ and number of side information messages $M$, it is optimal to download $K-M$ transmissions by using the trivial MDS coding scheme with all messages in one coding subspace if either of the following conditions is satisfied.
	 \begin{enumerate}
	 	\item $N> M$.
	 	\item $N^2+N \ge K-M$
	 \end{enumerate}
\end{theorem}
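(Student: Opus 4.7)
The plan is to derive both conclusions directly from the closed-form expression for $R^*$ established in Theorem~\ref{Thm:main}. Since that expression equals $K-M$ minus the two nonnegative correction terms $(L^*-1-N)^{+}\bar{M}$ and $((L^*-N)V)^{+}$, it suffices under each hypothesis to verify that both correction terms vanish.

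For the first hypothesis $N>M$, I would observe that $\bar{M}=\lfloor M/N\rfloor=0$ and $t=M$, which makes the first correction identically zero. For the second correction, I would substitute these values into $L^*=\lceil (K-M)/N\rceil$ and compute the product $(L^*-N)V=K-(L^*-1)N-M-N=K-L^{*}N-M$. By definition of the ceiling, $L^{*}N\ge K-M$, so this quantity is nonpositive and its positive part is zero, giving $R^*=K-M$.

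For the second hypothesis $N^{2}+N\ge K-M$, the crux will be to establish the bound $L^{*}\le N+1$, after which $(L^{*}-1-N)^{+}=0$ eliminates the first correction. I would prove this bound by writing $K-t=(K-M)+N\bar{M}\le N(N+1)+N\bar{M}=N(\bar{M}+N+1)$, dividing by $\bar{M}+N$, and noting that $N/(\bar{M}+N)\le 1$, so the argument of the ceiling is at most $N+1$. For the second correction in the case $L^{*}=N+1$, the product $(L^{*}-N)V$ simplifies, using the identity $N\bar{M}+t=M$, to $K-M-N^{2}-N$, which is nonpositive by hypothesis. Its positive part is again zero, so $R^{*}=K-M$.

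The only subtlety I anticipate is the degenerate regime $L^{*}\le N$, where $(L^{*}-N)$ is zero or negative and $V$ is not meaningfully defined in isolation. In those cases I would appeal directly to Lemma~\ref{lm:sizelessthanN}, which already yields $R^*=K-M$ for any partition into at most $N$ coding subspaces; this matches the case split (on whether $L^{*}\le N$, $L^{*}=N+1$, or $L^{*}>N+1$) used when assembling the closed form inside the proof of Theorem~\ref{Thm:main}. Once this convention is fixed, the verification under either hypothesis reduces to the elementary manipulations sketched above, and I do not expect any further obstacle.
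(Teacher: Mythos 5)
Your proof is correct, and in substance it mirrors the paper's: under both hypotheses one just checks that the two correction terms in the closed form of Theorem~\ref{Thm:main} vanish. The difference is in how condition~1 is handled and in how condition~2 is organized. For $N>M$, the paper argues semantically --- since $m_1\ge\dots\ge m_L$ and $\sum_{i\le N}m_i=M<N$ forces $m_N=0$, hence $m_{N+1}=\dots=m_L=0$ and Lemma~\ref{lm:sizelargerthanN} gives $R^*=K-M$ --- whereas you simply set $\bar{M}=0$ in the closed form and show the second correction's numerator $K-L^*N-M$ is nonpositive by definition of the ceiling. For condition~2 the paper splits into $N^2>K-M$ (showing $L^*\le N$) and $N^2\le K-M\le N^2+N$ (showing $L^*=N+1$ and $m_{N+1}=0$), while you prove the single bound $L^*\le N+1$ in one stroke from $K-t\le N(\bar{M}+N+1)$ and then dispatch the two residual cases; your version is slightly tighter and avoids the duplicated algebra. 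Your appeal to Lemma~\ref{lm:sizelessthanN} (and to the case split inside the proof of Theorem~\ref{Thm:main}) to cover the degenerate regime $L^*\le N$, where $V$ is ill-defined in isolation, is exactly the convention the paper itself relies on, so that subtlety is handled correctly.
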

\begin{proof}
	If the first condition is satisfied, $N>M$, as it is optimal to only consider the coding schemes with partition of size larger than $N$, it can be shown that the number of side information messages used in the $N$-th coding subspace is $m_N = 0$.
	Hence, we have $m_{N+1}=\dots,m_{L} =0$. The minimum number of required transmission is $K-M$.
	
	If the second condition holds, $N^2+N \ge K-M$. Let us further assume that $N^2>K-M$, then we have
	\begin{align}
		&N^2+N\lfloor\frac{M}{N}\rfloor > K-M+ N\lfloor\frac{M}{N}\rfloor\\
		\Leftrightarrow& N+1> \frac{K-M+ N\lfloor\frac{M}{N}\rfloor}{N+\lfloor\frac{M}{N}\rfloor}+1\ge L^*
	\end{align}
	Hence we have $L^* \le N$. According to Lemma~\ref{lm:sizelessthanN}, the minimum number of required transmissions is $K-M$.
	If $N^2\le K-M\le N^2+N$, then it can be shown that $L^*=N+1$ and 
	\begin{align}
		& K-M - N\lfloor\frac{M}{N}\rfloor   \le N^2+N -N\lfloor\frac{M}{N}\rfloor \\
		\Leftrightarrow& K-N(\lfloor\frac{M}{N}\rfloor  +N) - (M-N\lfloor\frac{M}{N}\rfloor )-N \le 0\\
		\Leftrightarrow& m_{N+1} = (K-N(\lfloor\frac{M}{N}\rfloor  +N) - (M-N\lfloor\frac{M}{N}\rfloor )-N)^+=0
	\end{align}
	Since $L^* = N+1$ and $m_{N+1}=0$, according to Theorem~\ref{Thm:main}, $R^* =K-M$.
	Thus, if any of the three conditions is satisfied, it is optimal to use the trivial MDS coding scheme with $K-M$ transmissions which takes all messages in one coding subspace.
\end{proof}

\section{Optimal Linear Coding Scheme}\label{sec:CodingScheme}
In this section, we show how to construct an optimal coding scheme for the single-server multi-message private information retrieval with side information problem.

For any single-server multi-message private information retrieval with side information problem with 
$K$ total messages, $M$ side information messages and $N$ demand messages, we first compute $L^*$ defined by Eqn.~\eqref{eq:L*} and $m_L$ defined by Eqn.~\eqref{eq:ml}.
If $L^*\le N+1$ and $m_L = 0$, then $R^* = K-M$. It is trivial that the optimal linear coding scheme is the MDS coding scheme that takes all messages into one coding subspace.
If $L^*> N+1$ or $L^*=N+1$, $m_L > 0$, we use the following  steps to construct the optimal linear coding schemes:
\begin{itemize}
	\item[] \textbf{Step 1}: 
	The user creates a set of $L^*$ subsets, denoted by $\{\wp_1,\dots,\wp_{L^*}\}$ and $\forall i \in [L^*]$, the size of $\wp_i$ satisfies:
	\begin{align}
		|\wp_i| = \left\{
		\begin{aligned}
			&\lfloor\frac{M}{N}\rfloor + N+1, & \forall t \in \{1,\dots,t\} \\
			&\lfloor\frac{M}{N}\rfloor + N, & \forall  t \in \{t+1,\dots,L^*-1\}\\
			& K- (L-1)(\lfloor \frac{M}{N} \rfloor +N)-t, & \text{for}\ i =L^*
		\end{aligned}
		\right.
	\end{align}
	where $t = M-N\lfloor \frac{M}{N} \rfloor$. Let $c_i$ for $i \in [L^*]$ denote the number of demand messages in subset $\wp_i$ and initiate to $0$

	\item[] \textbf{Step 2}: 
	For the first demand message $X_{W_1}$, the user randomly selects one subset $\wp_i$ ($i \in [L^*]$) to contain it with probability $\frac{|\wp_i|}{K}$.
	The user updates $c_i = c_i+1$.
	Then for the $j$-th demand message ($j\in[N]$), the user randomly selects one subset $\wp_u$ ($u \in [L^*]$) to contain it with probability $\frac{|\wp_u|-c_u}{K-j+1}$.
	Iteratively, the user places all demand messages into the subsets.

	\item[] \textbf{Step 3}: For each subset $\wp_i$ with $c_i > 0$, the user randomly selects $m_i$ side information messages to put into $\wp_i$, where $m_i$ satisfies:
	\begin{align}
		m_i = \left\{
		\begin{aligned}
		   &\lfloor\frac{M}{N} \rfloor +1, & 1\le i\le t\\
		   &\lfloor \frac{M}{N} \rfloor, & t+1 \le i \le L^*-1\\
		   &(|\wp_{L^*}|-N)^+, & i=L^*
		\end{aligned}
		\right.
	\end{align}
	
	\item[] \textbf{Step 4}: The user randomly distributes the other messages to fill up the  remaining empty spaces in each subset.
	
	\item[] \textbf{Step 5}: The user sends queries to the server according to the coding scheme which satisfies the MDS-Condition in each coding subspace $\wp_i$ ($\forall i \in [L^*]$) with $R(|\wp_i|,m_i)$ transmissions. 
\end{itemize}

We name the coding schemes constructed by this method as Partition-and-MDS-Coding scheme, which is a modified based on optimal coding scheme for single demand message~\cite{2017arXiv170900112K}. The way we select subsets for demand messages is related to the URN problem. The probability of any $N$ messages to be the demand messages follows the binomial distribution.

\begin{theorem}
	The Partition-and-MDS-Coding schemes satisfies the Retrieval Condition and the Privacy Condition.
\end{theorem}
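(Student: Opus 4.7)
The plan is to verify the two conditions separately. The Retrieval Condition is structural: every demand message $X_{W_j}$ is placed in exactly one coding subspace $\wp_i$ by Step~2, and for every subspace $\wp_i$ that contains at least one demand message the construction also deposits $m_i$ side-information messages in Step~3. Because the code in $\wp_i$ satisfies the MDS-Condition (Lemma~\ref{lm:mds}), these $m_i$ known messages together with the $|\wp_i|-m_i$ transmissions returned by the server suffice to recover all $|\wp_i|$ messages, and in particular all demand messages that lie in $\wp_i$. The corner case $|\wp_{L^*}|\le N$ is handled directly by Lemma~\ref{lm:tscs}: the $|\wp_{L^*}|$ transmissions are linearly independent over the messages of $\wp_{L^*}$, so any demand message there is recovered without any side information.

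For the Privacy Condition, I would fix an arbitrary labeled partition $\pi=(\wp_1,\dots,\wp_{L^*})$ of $[K]$ whose subspace sizes match those prescribed in Step~1. Because the MDS coefficients chosen in Step~5 can be drawn from a distribution that does not depend on $\mathbf{W}$, it suffices to show that the distribution of $\pi$ induced by Steps~2--4 is independent of $\mathbf{W}$. Writing $c_i=|\wp_i\cap\mathbf{W}|$ and $s_i=|\wp_i\cap\mathbf{S}|$, the sequential demand placement in Step~2 contributes the factor
\begin{equation}
\frac{\prod_{i=1}^{L^*} |\wp_i|(|\wp_i|-1)\cdots(|\wp_i|-c_i+1)}{K(K-1)\cdots(K-N+1)},
\end{equation}
which depends on $\mathbf{W}$ only through the count vector $(c_i)$; a direct inspection shows that the factors produced by Steps~3 and~4 likewise depend on $(\mathbf{W},\mathbf{S})$ only through $(c_i)$ and $(s_i)$.

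The key combinatorial step is then to marginalize over $\mathbf{S}$. For each feasible count profile $(s_i)$ there are exactly $\prod_i\binom{|\wp_i|-c_i}{s_i}$ choices of $\mathbf{S}\subseteq[K]\setminus\mathbf{W}$ realizing it, each carrying weight $1/\binom{K-N}{M}$. Applying the same URN-style identity that already underlies the single-demand scheme of~\cite{2017arXiv170900112K} cancels all dependence on $(c_i)$ and yields
\begin{equation}
\Pr(\pi\mid\mathbf{W})=\frac{1}{\binom{K}{|\wp_1|,\ldots,|\wp_{L^*}|}},
\end{equation}
the uniform distribution over labeled partitions of $[K]$ with the prescribed subspace sizes. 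Combined with the $\mathbf{W}$-independence of the MDS coefficients, this gives $I(\mathcal{W};Q)=0$.

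The delicate part of the argument is the bookkeeping in the corner cases where some $c_i=0$ (so Step~3 deposits no side information into $\wp_i$ and the unplaced side-information messages are absorbed into Step~4) or where $|\wp_{L^*}|\le N$ (so $m_{L^*}=0$). In both situations one must check that the residual side-information messages play the same role as the $K-M-N$ genuine ``other'' messages during Step~4, so that the induced distribution on $\pi$ is still uniform after averaging over $\mathbf{S}$.
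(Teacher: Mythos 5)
Your retrieval-condition argument is essentially the paper's: transmissions plus side information cover the full dimension of each coding subspace, and the MDS property of the construction (note this is by design of Step~5, not an invocation of Lemma~\ref{lm:mds}, which is a structural lemma about optimal schemes) then gives decodability; your explicit handling of $|\wp_{L^*}|\le N$ is a useful addition. For the privacy condition you take a genuinely different route. The paper reasons backward, computing the server's posterior $\Pr(\mathcal{W}=\{Z_1,\dots,Z_N\}\mid\text{observed partition})$ via a chain rule, showing each conditional factor equals $\frac{1}{K-i+1}$, so the posterior is the flat $\frac{1}{\binom{K}{N}}$. You instead reason forward: fix the labeled partition $\pi$ and argue $\Pr(\pi\mid\mathbf{W})=\frac{1}{\binom{K}{|\wp_1|,\dots,|\wp_{L^*}|}}$, uniform and hence $\mathbf{W}$-independent, after marginalizing over $\mathbf{S}$. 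Under the uniform prior on $\mathcal{W}$ the two are Bayes-equivalent, and your version is arguably cleaner in that it keeps the randomization (Steps~2--4 and the choice of $\mathbf{S}$) on the ``generation'' side rather than mixing the construction's randomness with the server's inference, which is where the paper's proof gets notationally muddy (the factor $\Pr(W_i=Z_i\mid W_i\in\wp_j,\dots)$ is a posterior-symmetry claim dressed up as a construction probability). Both proofs are comparably terse on the combinatorial bookkeeping: the paper asserts its telescoping cancellation, and you defer the ``URN-style identity'' and the Step~3/Step~4 absorption of unplaced side information to inspection. You are honest about this, and a spot-check (e.g., $K=5$, $N=1$, $M=1$ with partition sizes $(2,2,1)$ gives $\Pr(\pi\mid W)=\frac{1}{30}$ for every $W$) confirms your multinomial formula, so the route is sound even if the cancellation is not written out.
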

\begin{proof}
	For each coding subspace $\wp_i$, if it contains demand messages,
	the number of transmissions $R(|\wp_i|,m_i)$ and the number of side information messages $m_i$ in such coding subspace satisfy $R(|\wp_i|,m_i)+m_i = |\wp_i|$.
	Additionally, the Partition-and-MDS-Coding scheme satisfies MDS-Condition in very coding subspace.
	Thus, all missing messages in $\wp_i$ can be successfully decoded, including the demand messages.
	Therefore, the Retrival Condition is satisfied.
	
	The probability that any $N$ messages (e.g. $\{X_{Z_1},\dots,X_{Z_N}\}$) are the demand messages can be computed as
	\begin{align}
		\Pr(\mathcal{W} = \{Z_1,\dots,Z_N\}) &= N!\, \Pr(W_1=Z_1,W_2=Z_2,\dots,W_N=Z_N)\\
		& = N!\, \Pr(W_1=Z_1)Pr(W_2=Z_2,\dots,W_N=Z_N|W_1=Z_1)\\
		& = N!\, \prod_{i=1}^{N} \Pr(W_i = Z_i|W_1^{i-1} = Z_1^{i-1})
	\end{align}
	According to the construction of the Partition-and-MDS-Coding scheme and assume that $Z_i \in \wp_j$, we have
	\begin{align}
		\Pr(W_i = Z_i|W_1^{i-1} = Z_1^{i-1}) &= \Pr(W_i \in \wp_j| W_1^{i-1} = Z_1^{i-1}) \Pr(W_i=Z_i| W_i \in \wp_j, W_1^{i-1} = Z_1^{i-1} )\\
		& = \frac{|\wp_j| - \wp_j \setminus (\wp_j \cap \{Z_1,\dots,Z_{i-1}\})}{K-i+1}\frac{1}{|\wp_j| - \wp_j \setminus (\wp_j \cap \{Z_1,\dots,Z_{i-1}\})}\\
		& = \frac{1}{K-i+1}
	\end{align}
	Hence, we have
	\begin{align}
		\Pr(\mathcal{W} = \{Z_1,\dots,Z_N\}) &= N!\, \prod_{i=1}^{N} \frac{1}{K-i+1} = \frac{N!}{K(K-1)\cdots (K-N+1)}
		= \frac{1}{\binom{K}{N}}
	\end{align}
	Since there are $\binom{K}{N}$ possible demand message pairs with size $N$, every $N$-message pair is equally likely to be the demand messages, which satisfies the Privacy Condition of multi-message PIR.
\end{proof}

\begin{example*}
	We construct the linear coding scheme in Example~\ref{eg:1} by using the Partition-and-MDS-Coding method. It is easy to compute and verify that $L^* = 3>N$ and $m_3 = 2 >0$.
	Step 1: We first create three coding subspace ($\wp_1$, $\wp_2$, $\wp_3$) with size $|\wp_1| = 5$, $|\wp_2| = 4$ and $|\wp_3|=4$.
	Step 2: We randomly select one coding subspace from $\{\wp_1,\wp_2,\wp_3\}$ to contain the first demand message $X_2$ with probability $\frac{5}{13}$, $\frac{4}{13}$ and $\frac{4}{13}$, respectively.
	Suppose $\wp_1$ is chosen. 
	Then for the second demand message $X_2$, we randomly select one coding subspace from $\{\wp_1,\wp_2,\wp_3\}$ to contain the first demand message $X_2$ with probability $\frac{4}{13}$, $\frac{4}{13}$ and $\frac{4}{13}$, respectively.
	Suppose $\wp_3$ is chosen.
	Step 3: For $\wp_1$ and $\wp_3$, the coding subspaces which are chosen to contain demand messages, we randomly distribute $3$ and $2$ side information messages into them, respectively.
	Suppose $X_1,X_4,X_6$ are placed in $\wp_1$ and $X_7,X_9$ are placed in $\wp_2$.
	Step 4: Randomly distribute the remaining messages into the coding subspaces. 
	Suppose we get $\wp_1 = \{X_1,X_2,X_4,X_6,X_8\}$, $\wp_2 = \{X_3,X_{10},X_{11},X_{13}\}$ and $\wp_3 = \{X_5,X_7,X_9,X_{12}\}$.
	Step 5: The user generates queries according to the linear coding scheme $\mathbf{T} = \{T_1,\dots,T_6\}$ shown in Example~\ref{eg:1}. 
	
	From the server's perspective, the probability for any two messages to be the demand message is the same, which is $\frac{1}{\binom{13}{2}} =\frac{1}{78}$. Thus, the server cannot infer any information about the indices of the demand messages.
\end{example*}

\section*{appendix}
We provide an alternative proof for the converse of the minimum number of required transmissions for the single-server multi-message private information retrieval with side information problem.
The proof techniques are inspired by~\cite{2017arXiv170903022C}. 
Suppose each message has $L$ bits and messages are independent from each other, i.e.,
\begin{align}
	H(X_1,\dots,X_K) &= H(X_1)+ \dots + H(X_K),\\
	H(X_1) &= \dots = H(X_K) = L.
\end{align}
Recall that $\mathbf{W}$ and $\mathbf{S}$ denote the sets of indices of demand messages and side information messages, respectively.
Let $Q^{[\mathbf{W},\mathbf{S}]}$ denote the query that is generated for side information indexed by $\mathbf{S}$ and demand messages indexed by $\mathbf{W}$.
Let $A^{[\mathbf{W},\mathbf{S}]}$ denote the answer generated by the server after receiving query $Q^{[\mathbf{W},\mathbf{S}]}$.
Since the answer generated by the server is a deterministic function of the query and messages, we have
\begin{align}
	H(A^{[\mathbf{W},\mathbf{S}]}|Q^{[\mathbf{W},\mathbf{S}]},X_{1:K}) = 0\label{eq:deta}
\end{align}
The retrieval condition~\eqref{eq:retcon} is equivalent to
\begin{align}
	H(X_{\mathbf{W}}|A^{[\mathbf{W},\mathbf{S}]},Q^{[\mathbf{W},\mathbf{S}]},X_{\mathbf{S}}) = 0\label{eq:rcap}.
\end{align}
The privacy condition~\eqref{eq:pricon} is equivalent to the condition that for any $\mathbf{W},\mathbf{W}' \subseteq [K]$ and $|\mathbf{W}|=|\mathbf{W}'| = N$, there exists $\mathbf{S} \subseteq [K]\setminus \mathbf{W}$, $\mathbf{S}' \subseteq [K]\setminus \mathbf{W}'$ and $|\mathbf{S}| = |\mathbf{S}'| = M$ such that
\begin{align}
(A^{[\mathbf{W},\mathbf{S}]},Q^{[\mathbf{W},\mathbf{S}]},X_{1:K})\sim (A^{[\mathbf{W}',\mathbf{S}']},Q^{[\mathbf{W}',\mathbf{S}']},X_{1:K})\label{eq:w}.
\end{align}
where $A\sim B$ means that $A$ and $B$ are identically distributed.

Suppose $\mathbf{W}_0 = \mathbf{W}$ and $\mathbf{S}_0 = \mathbf{S}$ are the set of indices of demand messages and side information messages, respectively.

Then the total number of download bits ($D$) can be lower-bounded as follows.
\begin{align}
	D&\ge H(A^{[\mathbf{W}_0,\mathbf{S}_0]}|Q^{[\mathbf{W}_0,\mathbf{S}_0]},X_{\mathbf{S}_0})\\
	& = H(X_{\mathbf{W}_0},A^{[\mathbf{W}_0,\mathbf{S}_0]}|Q^{[\mathbf{W}_0,\mathbf{S}_0]},X_{\mathbf{S}_0})- H(X_{\mathbf{W}_0}|A^{[\mathbf{W}_0,\mathbf{S}_0]},Q^{[\mathbf{W}_0,\mathbf{S}_0]},X_{\mathbf{S}_0})\\
	& \stackrel{\eqref{eq:correctness}}{=} H(X_{\mathbf{W}_0}|Q^{[\mathbf{W}_0,\mathbf{S}_0]},X_{\mathbf{S}_0}) + H(A^{[\mathbf{W}_0,\mathbf{S}_0]}|Q^{[\mathbf{W}_0,\mathbf{S}_0]},X_{\mathbf{W}_0\cup\mathbf{S}_0})\\
	&= NL+H(A^{[\mathbf{W}_0,\mathbf{S}_0]}|Q^{[\mathbf{W}_0,\mathbf{S}_0]},X_{\mathbf{W}_0\cup\mathbf{S}_0}).
\end{align}

According to the privacy condition, for any $\mathbf{W}_i \subseteq [K]$, $|\mathbf{W}_i|= N$, there exists $\mathbf{S}_i \subseteq [K]\setminus \mathbf{W}_i$ which satisfies the retrieval condition, i.e.,
\begin{align}
H(X_{\mathbf{W}_j}| A^{[\mathbf{W}_j,\mathbf{S}_j]},Q^{[\mathbf{W}_j,\mathbf{S}_j]},X_{\mathbf{S}_j}) = 0\label{eq:correctness}.
\end{align}
We have
\begin{align}
	H(A^{[\mathbf{W}_0,\mathbf{S}_0]}|Q^{[\mathbf{W}_0,\mathbf{S}_0]},X_{[K]}) = \min_{\mathbf{S}_i}
	H(A^{[\mathbf{W}_i,\mathbf{S}_i]}|Q^{[\mathbf{W}_i,\mathbf{S}_i]},X_{[K]}).\label{eq:st}
\end{align}
Hence, for the special case $i =1$, we have
\begin{align}
	D &\ge NL +\min_{\mathbf{S}_1}
	H(A^{[\mathbf{W}_1,\mathbf{S}_1]}|Q^{[\mathbf{W}_1,\mathbf{S}_1]},X_{\mathbf{W}_0\cup\mathbf{S}_0})\\
	& =NL+ \min_{\mathbf{S}_1} H(X_{\mathbf{W}_1,\mathbf{S}_1}|Q^{[\mathbf{W}_1,\mathbf{S}_1]},X_{\mathbf{W}_0\cup \mathbf{S}_0}) + H(A^{[\mathbf{W}_1,\mathbf{S}_1]}|Q^{[\mathbf{W}_1,\mathbf{S}_1]},X_{\mathbf{W}_0^1\cup \mathbf{S}_0^1})\nonumber\\ 
	&\ \ \ -H(X_{\mathbf{W}_1\cup\mathbf{S}_1}|A^{[\mathbf{W}_1,\mathbf{S}_1]},Q^{[\mathbf{W}_1,\mathbf{S}_1]},X_{\mathbf{W}_0\cup\mathbf{S}_0})\\
	& =NL+\min_{\mathbf{S}_1} H(X_{\mathbf{W}_1,\mathbf{S}_1}|Q^{[\mathbf{W}_1,\mathbf{S}_1]},X_{\mathbf{W}_0\cup \mathbf{S}_0})+ H(A^{[\mathbf{W}_1,\mathbf{S}_1]}|Q^{[\mathbf{W}_1,\mathbf{S}_1]},X_{\mathbf{W}_0^1\cup \mathbf{S}_0^1}) \nonumber\\
	&\ \ \ - H(X_{\mathbf{S}_1}|A^{[\mathbf{W}_1,\mathbf{S}_1]},Q^{[\mathbf{W}_1,\mathbf{S}_1]},X_{\mathbf{W}_0\cup\mathbf{S}_0}) - H(X_{\mathbf{W}_1}|A^{[\mathbf{W}_1,\mathbf{S}_1]},Q^{[\mathbf{W}_1,\mathbf{S}_1]},X_{\mathbf{W}_0\cup\mathbf{S}_0^1})\\
	& \stackrel{\eqref{eq:correctness}}{=}NL+\min_{\mathbf{S}_1} H(X_{\mathbf{W}_1,\mathbf{S}_1}|Q^{[\mathbf{W}_1,\mathbf{S}_1]},X_{\mathbf{W}_0\cup \mathbf{S}_0})- H(X_{\mathbf{S}_1}|A^{[\mathbf{W}_1,\mathbf{S}_1]},Q^{[\mathbf{W}_1,\mathbf{S}_1]},X_{\mathbf{W}_0\cup\mathbf{S}_0})\nonumber \\
	&\ \ \ 
	+H(A^{[\mathbf{W}_1,\mathbf{S}_1]}|Q^{[\mathbf{W}_1,\mathbf{S}_1]},X_{\mathbf{W}_0^1\cup \mathbf{S}_0^1}) 
\end{align}
We can apply the following substitutions iteratively
\begin{align}
	\min_{\mathbf{S}_i} H(A^{[\mathbf{W}_i,\mathbf{S}_i]}|Q^{[\mathbf{W}_i,\mathbf{S}_i]},X_{\mathbf{W}_0^i\cup \mathbf{S}_0^i}) = \min_{\mathbf{S}_{i+1}} H(A^{[\mathbf{W}_{i+1},\mathbf{S}_{i+1}]}|Q^{[\mathbf{W}_{i+1},\mathbf{S}_{i+1}]},X_{\mathbf{W}_0^{i+1}\cup \mathbf{S}_0^{i+1}}).
\end{align}
Suppose after $T$ substitutions, we have 
 \begin{align}
 \mathbf{W}_0^T \cup \mathbf{S}_0^T = [K] \label{eq:k}.
 \end{align}

Then we have the lower-bound for $D$ as follows.
\begin{align}
	D &\ge NL + \min_{\mathbf{S}_1,\dots,\mathbf{S}_T} \sum_{i=1}^{T}\left[ H(X_{\mathbf{W}_i,\mathbf{S}_i}|Q^{[\mathbf{W}_i,\mathbf{S}_i]},X_{\mathbf{W}_0^{i-1}\cup \mathbf{S}_0^{i-1}})- H(X_{\mathbf{S}_i}|A^{[\mathbf{W}_i,\mathbf{S}_i]},Q^{[\mathbf{W}_i,\mathbf{S}_i]},X_{\mathbf{W}_0^{i-1}\cup\mathbf{S}_0^{i-1}})\right]\nonumber \\
	&\ \ \ 
	+H(A^{[\mathbf{W}_T,\mathbf{S}_T]}|Q^{[\mathbf{W}_T,\mathbf{S}_T]},X_{\mathbf{W}_0^T\cup \mathbf{S}_0^T})\\
	&\stackrel{\eqref{eq:deta}\eqref{eq:k}}{=} NL + \min_{\mathbf{S}_1,\dots,\mathbf{S}_T} \sum_{i=1}^{T} \left[H(X_{\mathbf{W}_i,\mathbf{S}_i}|Q^{[\mathbf{W}_i,\mathbf{S}_i]},X_{\mathbf{W}_0^{i-1}\cup \mathbf{S}_0^{i-1}})- H(X_{\mathbf{S}_i}|A^{[\mathbf{W}_i,\mathbf{S}_i]},Q^{[\mathbf{W}_i,\mathbf{S}_i]},X_{\mathbf{W}_0^{i-1}\cup\mathbf{S}_0^{i-1}})\right]\label{eq:lb1}	
\end{align}
Note that each term in the summation is non-negative, since
\begin{align}
H(X_{\mathbf{W}_i,\mathbf{S}_i}|Q^{[\mathbf{W}_i,\mathbf{S}_i]},X_{\mathbf{W}_0^{i-1}\cup \mathbf{S}_0^{i-1}}) &\ge H(X_{\mathbf{S}_i}|Q^{[\mathbf{W}_i,\mathbf{S}_i]},X_{\mathbf{W}_0^{i-1}\cup \mathbf{S}_0^{i-1}})\\
&\ge H(X_{\mathbf{S}_i}|A^{[\mathbf{W}_i,\mathbf{S}_i]},Q^{[\mathbf{W}_i,\mathbf{S}_i]},X_{\mathbf{W}_0^{i-1}\cup\mathbf{S}_0^{i-1}})
\end{align}

In order to get a lower-bound for the total number of download bits ($D$), we need to minimize the summation.
And this lower-bound works for any choice of $\{\mathbf{W}_1,\dots,\mathbf{W}_T\}$.
We will construct a special set $\{\mathbf{W}_1,\dots,\mathbf{W}_T\}$ such that we can compute the minimum of the summation.

For any $i \in \mathbf{W}_0$, let $\mathbf{V}_i \subset \mathbf{S}_0$ denote the minimum subset such that
\begin{align}
H(X_i|A^{[\mathbf{W}_0],\mathbf{S}_0},Q^{[\mathbf{W}_0],\mathbf{S}_0},X_{\mathbf{V}_i}) = 0\label{eq:v}
\end{align}
Without loss of optimality, we may assume that $\cup_{i\in \mathbf{W}_0} \mathbf{V}_i = \mathbf{S}_0$. 
Let $i^* = \arg\max_{i} |\cup_{j\in \mathbf{W}_0\setminus i}\mathbf{V}_j|$.
We construct $W_t$ for $t \in [T]$ as the following steps.
\begin{enumerate}
	\item Put indices $\mathbf{W}_0 \setminus i^*$ into $\mathbf{W}_t$.
	\item Add another index $i_t$ into $\mathbf{W}_t$, where $i_t \not \in \mathbf{W}_0^{t-1}$.
\end{enumerate}
After we have $\mathbf{W}_t$, we can select $\mathbf{S}_t$ to maximize the corresponding term in the summation in Equation~\eqref{eq:lb1}.
In such way, each round we add a new index in $\mathbf{W}_0^{t}$. Hence, after $T = K-N$ rounds, we have $\mathbf{W}_0^T = [K]$.

When the newly added index $i_t \in \mathbf{S}_0^{t-1}$, the optimal choice for $\mathbf{S}_t$ is $\mathbf{S}_t \subset (\mathbf{W}_0^{t-1}\cup\mathbf{S}_0^{t-1}\setminus i_t)$. 
In such case 
\begin{align}
	H(X_{\mathbf{W}_t,\mathbf{S}_t}|Q^{[\mathbf{W}_t,\mathbf{S}_t]},X_{\mathbf{W}_0^{t-1}\cup \mathbf{S}_0^{t-1}})- H(X_{\mathbf{S}_t}|A^{[\mathbf{W}_t,\mathbf{S}_t]},Q^{[\mathbf{W}_t,\mathbf{S}_t]},X_{\mathbf{W}_0^{t-1}\cup\mathbf{S}_0^{t-1}})=0,
\end{align} 
implying that this choice achieves the minimum. 

By assumption, $A^{[\mathbf{W}_t,\mathbf{S}_t]}$ given side information $X_{\mathbf{S}_t}$ permits to decode $X_{\mathbf{W}_t}.$
It is possible that the same $A^{[\mathbf{W}_t,\mathbf{S}_t]},$ given the same side information $X_{\mathbf{S}_t},$ {\it also}  permits to decode further messages.
Let us denote the indices of these decodable messages by ${\mathbf{U}_t}$ (noting that ${\mathbf{U}_t}$ may be the empty set), and the corresponding messages by $X_{\mathbf{U}_t}.$ Clearly, $\mathbf{U}_t \subseteq [K] \setminus (\mathbf{W}_t\cup \mathbf{S}_t),$
and the definition of $X_{\mathbf{U}_t}$ can be written as
\begin{align}
H(X_{\mathbf{U}_t}| A^{[\mathbf{W}_t,\mathbf{S}_t]},Q^{[\mathbf{W}_t,\mathbf{S}_t]},X_{\mathbf{S}_t}) = 0.
\end{align}

Similarly, when the newly added index $i_t \in \mathbf{U}_0^t$, we can show that the optimal choice for $\mathbf{S}_t$ is $\mathbf{S}_t \subset (\mathbf{W}_0^{t-1}\cup\mathbf{S}_0^{t-1}\setminus i_t)$.
In such cases, 
\begin{align}
	H(X_{\mathbf{W}_t,\mathbf{S}_t}|Q^{[\mathbf{W}_t,\mathbf{S}_t]},X_{\mathbf{W}_0^{t-1}\cup \mathbf{S}_0^{t-1}})- H(X_{\mathbf{S}_t}|A^{[\mathbf{W}_t,\mathbf{S}_t]},Q^{[\mathbf{W}_t,\mathbf{S}_t]},X_{\mathbf{W}_0^{t-1}\cup\mathbf{S}_0^{t-1}})=H(X_{i_t}) = L.
\end{align}
which achieves the minimum.

Now, the difficulty is minimizing those terms in the summation of Equation~\eqref{eq:lb1} where $i_t \not \in (\mathbf{W}_0^{t-1}\cup \mathbf{S}_0^{t-1}\cup \mathbf{U}_0^{t-1})$.
To deal with them, we need to further exploit the lower-bound expression.
Since $X_{\mathbf{W}_t\cup\mathbf{S}_t}$ is independent from the query $Q^{[\mathbf{W}_t,\mathbf{S}_t]}$, we have
\begin{align}
	\sum_{i=1}^{T} H(X_{\mathbf{W}_i,\mathbf{S}_i}|Q^{[\mathbf{W}_i,\mathbf{S}_i]},X_{\mathbf{W}_0^{i-1}\cup \mathbf{S}_0^{i-1}}) &= \sum_{i=1}^{T} H(X_{\mathbf{W}_i,\mathbf{S}_i}|X_{\mathbf{W}_0^{i-1}\cup \mathbf{S}_0^{i-1}}) \\
	&= (|\mathbf{W}_0^T\cup\mathbf{S}_0^T| - |\mathbf{W}_0\cup\mathbf{S}_0|)L \\
	&= (K-N-M)L.
\end{align}
Thus, the total number of download bits $D$ is also lower-bounded by
\begin{align}
	D &\ge  NL +  (|\mathbf{W}_0^T\cup\mathbf{S}_0^i|-|\mathbf{W}_0\cup\mathbf{S}_0|)L-\max_{\mathbf{S}_1,\dots,\mathbf{S}_T}\sum_{i=1}^{T} H(X_{\mathbf{S}_i}|A^{[\mathbf{W}_i,\mathbf{S}_i]},Q^{[\mathbf{W}_i,\mathbf{S}_i]},X_{\mathbf{W}_0^{i-1}\cup\mathbf{S}_0^{i-1}})\\
	& = (K-M)L - \max_{\mathbf{S}_1,\dots,\mathbf{S}_T}\sum_{i=1}^{T} H(X_{\mathbf{S}_i}|A^{[\mathbf{W}_i,\mathbf{S}_i]},Q^{[\mathbf{W}_i,\mathbf{S}_i]},X_{\mathbf{W}_0^{i-1}\cup\mathbf{S}_0^{i-1}})\label{eq:lb2}
\end{align}
Thus, we can also maximize the summation of conditional entropies in Equation~\eqref{eq:lb2} to get the lower-bound.
As we have shown before, for $\mathbf{W}_t$ with $i_t \in \{\mathbf{W}_0^{t-1}\cup\mathbf{S}_0^{t-1}\cup\mathbf{U}_0^{t-1}\}$, the optimal choice $\mathbf{S}_t \subset (\mathbf{W}_0^{t-1}\cup\mathbf{S}_0^{t-1}\setminus i_t)$, which implies
\begin{align}
	H(X_{\mathbf{S}_t}|A^{[\mathbf{W}_t,\mathbf{S}_t]},Q^{[\mathbf{W}_t,\mathbf{S}_t]},X_{\mathbf{W}_0^{t-1}\cup\mathbf{S}_0^{t-1}}) = 0.
\end{align} 
For $\mathbf{W}_t$ with $i_t \not\in \{\mathbf{W}_0^{t-1}\cup\mathbf{S}_0^{t-1}\cup\mathbf{U}_0^{t-1}\}$, since $\mathbf{W}_0\setminus i^* \subset \mathbf{W}_t$, we have $\cup_{j\in \mathbf{W}_0\setminus i^*} \mathbf{V}_j \subseteq \mathbf{S}_t$ to guarantee the decoding correctness of $X_{\mathbf{W}_0\setminus i^*}$.
Thus, we can upper-bound the corresponding conditional entropy by
\begin{align}
H(X_{\mathbf{S}_{t}}|A^{[\mathbf{W}_{t},\mathbf{S}_{t}]},Q^{[\mathbf{W}_{t},\mathbf{S}_{t}]},X_{\mathbf{W}_0^{t-1}\cup\mathbf{S}_0^{t-1}}) &\le H(X_{\mathbf{S}_t}|\mathbf{W}_0^{t-1}\cup\mathbf{S}_0^{t-1}\cup\mathbf{U}_0^{t-1})\\
& \le (|\mathbf{S}_t| - |\cup_{j\in \mathbf{W}_0\setminus i^*} \mathbf{V}_j|)
\end{align}
By assumption, $\cup_{j\in \mathbf{W}_0} \mathbf{V}_j = \mathbf{S}_0$, we have
\begin{align}
\max_{\mathbf{V}_1^N :|\cup_{i\in \mathbf{W}_0} \mathbf{V}_i| = M } |\mathbf{S}_t| -|\cup_{j\in \mathbf{W}_0\setminus i^*} \mathbf{V}_j| &= 
M-\min_{\mathbf{V}_1^N :|\cup_{i\in \mathbf{W}_0} \mathbf{V}_i| = M } \max_i|\cup_{j\in \mathbf{W}_0\setminus i} \mathbf{V}_j| \\
& 
\left\{
\begin{aligned}
&\le \left\lfloor \frac{M}{N} \right\rfloor & \text{ if } M\ge N\\
&= 0 &\text{ if } M< N
\end{aligned}
\right.\label{eq:mh}
\end{align} 
where the maximum is achieved when $\mathbf{V}_i \cap \mathbf{V}_j = \emptyset$ and ($M-N\left\lfloor \frac{M}{N} \right\rfloor$) $|\mathbf{V}_i|$'s are equal to $\left\lceil \frac{M}{N} \right\rceil$ and others are equal to $\left\lfloor \frac{M}{N} \right\rfloor$.

Therefore, for any $t \in [T]$, if $M<N$
\begin{align}
	H(X_{\mathbf{S}_{t}}|A^{[\mathbf{W}_{t},\mathbf{S}_{t}]},Q^{[\mathbf{W}_{t},\mathbf{S}_{t}]},X_{\mathbf{W}_0^{t-1}\cup\mathbf{S}_0^{t-1}}) = 0.
\end{align}
Otherwise, if $M \ge N$
\begin{align}
H(X_{\mathbf{S}_{t}}|A^{[\mathbf{W}_{t},\mathbf{S}_{t}]},Q^{[\mathbf{W}_{t},\mathbf{S}_{t}]},X_{\mathbf{W}_0^{t-1}\cup\mathbf{S}_0^{t-1}}) \le \left\lfloor \frac{M}{N} \right\rfloor.
\end{align}

\begin{lemma}
	If the number of side information messages is smaller than the number of demand messages, i.e. $M< N$, the minimum number of required transmissions is $K-M$.
\end{lemma}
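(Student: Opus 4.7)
The plan is to combine the lower bound already assembled in the appendix with the trivial upper bound coming from the MDS coding scheme. Specifically, the body of the paper observes that the single-subspace MDS construction always yields a valid PIR scheme using $K-M$ transmissions, so $R^* \le K-M$ always holds. It therefore suffices to establish the matching lower bound $R^* \ge K-M$ under the hypothesis $M < N$.

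To get the lower bound, I would plug directly into the inequality~\eqref{eq:lb2}, which states
\begin{align}
D \ge (K-M)L - \max_{\mathbf{S}_1,\dots,\mathbf{S}_T}\sum_{i=1}^{T} H(X_{\mathbf{S}_i}|A^{[\mathbf{W}_i,\mathbf{S}_i]},Q^{[\mathbf{W}_i,\mathbf{S}_i]},X_{\mathbf{W}_0^{i-1}\cup\mathbf{S}_0^{i-1}}).\nonumber
\end{align}
The case analysis preceding the lemma has already disposed of the indices $i_t \in \mathbf{W}_0^{t-1}\cup\mathbf{S}_0^{t-1}\cup\mathbf{U}_0^{t-1}$, for which the corresponding term is zero. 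For the remaining indices, the paper's counting argument yields the upper bound $|\mathbf{S}_t| - |\cup_{j\in\mathbf{W}_0\setminus i^*}\mathbf{V}_j|$ on the conditional entropy, and~\eqref{eq:mh} shows that when $M < N$ this quantity equals $0$, because one can choose the $\mathbf{V}_j$'s to be pairwise disjoint singletons (or empty), forcing $\max_i |\cup_{j\in \mathbf{W}_0\setminus i}\mathbf{V}_j| = M$. Thus every term in the sum vanishes, and we conclude $D \ge (K-M)L$, i.e. $R^* \ge K-M$ transmissions.

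The only substantive step is to carefully re-express the condition $M < N$ through the side-information-structure bound~\eqref{eq:mh}. The key observation is that when there are strictly fewer side-information messages than demanded messages, the pigeonhole-style argument in~\eqref{eq:mh} forces the $\mathbf{V}_j$'s to be disjoint, so removing any single $\mathbf{V}_{i^*}$ still leaves at most $M-|\mathbf{V}_{i^*}|$ elements covered by the other $\mathbf{V}_j$'s, and the maximum over $i^*$ is at least $M$ itself whenever some $|\mathbf{V}_{i^*}| = 0$, which is forced when $M<N$. This is really the only delicate point; once it is in hand, combining with~\eqref{eq:lb2} and the trivial achievability closes the proof in one line.

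I do not expect a genuine obstacle here: the heavy lifting was done in the derivation of~\eqref{eq:lb2} and in the case-by-case bounding of the conditional entropies. The lemma is essentially a corollary obtained by specializing the general bound to the regime $M<N$ and matching it with the MDS achievability.
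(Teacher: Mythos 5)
Your proposal is correct and matches the paper's proof: both take the upper bound $R^*\le K-M$ from the trivial single-subspace MDS scheme, and both obtain the matching lower bound by specializing~\eqref{eq:lb2} via~\eqref{eq:mh} to the case $M<N$, which makes every residual conditional-entropy term in the sum vanish and gives $D\ge (K-M)L$, i.e.\ $R^*\ge K-M$. One small wording slip worth fixing: pigeonhole does not force the $\mathbf{V}_j$'s to be pairwise disjoint; the clean justification of the $M<N$ case of~\eqref{eq:mh} is that if $\max_i\bigl|\cup_{j\in\mathbf{W}_0\setminus i}\mathbf{V}_j\bigr|<M$ then every $\mathbf{V}_i$ would have to contain an element appearing in no other $\mathbf{V}_j$, forcing $\bigl|\cup_j\mathbf{V}_j\bigr|\ge N>M$, a contradiction — so the minimax equals $M$ and the expression is $0$ regardless of whether the $\mathbf{V}_j$'s overlap.
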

\begin{proof}
	Suppose $M<N$, from Equation~\eqref{eq:mh} we have that
	\begin{align}
	M-\min_{\mathbf{V}_1^N :|\cup_{i\in \mathbf{W}_0} \mathbf{V}_i| = M } \max_i|\cup_{j\in \mathbf{W}_0\setminus i} \mathbf{V}_j| = 0.
	\end{align}
	Thus, each conditional entropy in the summation is zero, except the first term $H(X_{\mathbf{S}_0}) = ML$.
	Hence, we have $D\ge (K-M)L$ which gives $R = \frac{D}{L} \ge K-M$.
	Additionally, we know that the MDS coding scheme with $K-M$ is always a PIR scheme.
	Therefore, $R^* = K-M$.
\end{proof}

Based on this, we can conclude the following useful proportions.

\begin{proportion}
	If there exists $j \ne i$ ($i,j \in \mathbf{W}_0$) such that $\mathbf{V}_j = \mathbf{V}_i$, where $\mathbf{V}_i$ and $\mathbf{V}_j$ are defined in Equation~\eqref{eq:v}, then the minimum number of required transmissions is $K-M$.
\end{proportion}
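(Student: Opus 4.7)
The plan is to exploit the lower-bound expression in Equation~\eqref{eq:lb2} together with the combinatorial bound on $\max_i |\cup_{j\in\mathbf{W}_0\setminus i}\mathbf{V}_j|$ already derived in the appendix. Recall from that derivation that the choice $i^* = \arg\max_{i} |\cup_{j\in \mathbf{W}_0\setminus i}\mathbf{V}_j|$ enters the per-step upper bound
\begin{align}
H(X_{\mathbf{S}_{t}}|A^{[\mathbf{W}_{t},\mathbf{S}_{t}]},Q^{[\mathbf{W}_{t},\mathbf{S}_{t}]},X_{\mathbf{W}_0^{t-1}\cup\mathbf{S}_0^{t-1}}) \le |\mathbf{S}_t| - \Bigl|\bigcup_{j\in \mathbf{W}_0\setminus i^*} \mathbf{V}_j\Bigr|,
\end{align}
and that without loss of optimality $\bigcup_{k\in \mathbf{W}_0}\mathbf{V}_k = \mathbf{S}_0$, so $|\mathbf{S}_t| = M$.

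The key observation is that if $\mathbf{V}_i = \mathbf{V}_j$ for some $i\ne j$ in $\mathbf{W}_0$, then removing $\mathbf{V}_i$ from the union does not shrink it, because $\mathbf{V}_i = \mathbf{V}_j$ is still retained through index $j$. Formally,
\begin{align}
\bigcup_{k\in \mathbf{W}_0\setminus i}\mathbf{V}_k \supseteq \mathbf{V}_j = \mathbf{V}_i, \qquad \text{so}\qquad \bigcup_{k\in \mathbf{W}_0\setminus i}\mathbf{V}_k = \bigcup_{k\in \mathbf{W}_0}\mathbf{V}_k = \mathbf{S}_0.
\end{align}
Hence the maximizer $i^*$ achieves $\bigl|\bigcup_{k\in \mathbf{W}_0\setminus i^*}\mathbf{V}_k\bigr| = M$, which forces the per-step upper bound to vanish: $H(X_{\mathbf{S}_{t}}|\cdot)\le M-M = 0$ for every $t\in[T]$.

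Plugging this into the lower bound in Equation~\eqref{eq:lb2} yields $D \ge (K-M)L$, i.e., $R\ge K-M$. Combined with the trivial MDS achievability scheme of $K-M$ transmissions noted after Theorem~\ref{Thm:main}, we conclude $R^* = K-M$. I expect no serious obstacle here: the entire argument is a one-line combinatorial corollary of the already-established bound~\eqref{eq:mh} specialized to the case where two of the minimal decoding sets $\mathbf{V}_i$ coincide. The only care needed is to explicitly invoke the WLOG assumption $\bigcup_{k\in\mathbf{W}_0}\mathbf{V}_k = \mathbf{S}_0$ so that the equality $|\cup_{k\ne i^*}\mathbf{V}_k| = M$ holds with $|\mathbf{S}_t|=M$, which is exactly the reduction that collapses the summation to zero.
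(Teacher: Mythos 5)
Your argument is correct and follows essentially the same route as the paper's own proof: the paper observes that when $\mathbf{V}_i=\mathbf{V}_j$, dropping $i$ (or $j$) from $\mathbf{W}_0$ when constructing $\mathbf{W}_t$ still forces $\mathbf{S}_t$ to contain all of $\mathbf{S}_0$, so $H(X_{\mathbf{S}_t}\mid X_{\mathbf{S}_0})=0$, whereas you phrase the same fact through the bound $|\mathbf{S}_t| - |\bigcup_{k\in\mathbf{W}_0\setminus i^*}\mathbf{V}_k| \le M-M=0$. The two are equivalent; your version is just slightly more explicit in tying the collapse to the already-derived per-step upper bound and in invoking the WLOG assumption $\bigcup_k\mathbf{V}_k=\mathbf{S}_0$.
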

\begin{proof}
	If $\mathbf{V}_i = \mathbf{V}_j$ ($i\ne j$), then when we construct $\mathbf{W}_t$, we can remove either $i$ or $j$ from $\mathbf{W}_0$ and keep the others.
	In such way, no matter what new index we add into $\mathbf{W}_t$, we have $H(X_{\mathbf{S}_t}|X_{\mathbf{S}_0}) = 0$. Hence, the lower bound for the number of transmissions is $K-M$ and we know that MDS coding scheme can achieve this lower bound.
\end{proof}

\begin{proportion}\label{pp:NumOfMes}
	For any PIR coding scheme, for each $\mathbf{V}_i \ne \emptyset$, defined by Equation~\eqref{eq:v}, given $X_{\mathbf{V}_i}$, besides decoding $X_i$, there must exist at least another $N-1$ messages that can also be decoded.
\end{proportion}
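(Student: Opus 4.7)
The plan is to reduce the statement to a structural fact about the coding subspace containing $X_i$, using Lemma~\ref{lm:mds} and Lemma~\ref{lm:tscs} as the only heavy machinery. Let $\wp$ denote the coding subspace that contains $X_i$, and let $M_\wp$ be its MDS parameter (existence guaranteed by Lemma~\ref{lm:mds}). Since distinct coding subspaces encode disjoint sets of messages, any transmission outside $\wp$ is independent of $X_i$ given the messages in $\wp$; hence the minimum subset of $\mathbf{S}_0$ needed to recover $X_i$ from $(A^{[\mathbf{W}_0,\mathbf{S}_0]},Q^{[\mathbf{W}_0,\mathbf{S}_0]})$ is a subset of $\wp \cap \mathbf{S}_0$, and the MDS-Condition in $\wp$ pins down its size to exactly $M_\wp$. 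Thus $|\mathbf{V}_i| = M_\wp$ and $\mathbf{V}_i \subseteq \wp$.

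Next I would use the hypothesis $\mathbf{V}_i \neq \emptyset$ to force $|\wp| > N$. Indeed, if $|\wp| \le N$, Lemma~\ref{lm:tscs} says the number of transmissions in $\wp$ equals $|\wp|$, leaving no room for any side information to be used inside $\wp$, so $M_\wp = 0$ and $\mathbf{V}_i = \emptyset$, contradicting the assumption. Therefore $|\wp| > N$, and the second branch of Lemma~\ref{lm:tscs} applies: the number of transmissions equals $|\wp| - M_\wp$ and must be at least $N$, giving the key inequality $|\wp| - M_\wp \ge N$.

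Finally, given $X_{\mathbf{V}_i}$ (which is a set of $M_\wp$ messages in $\wp$) together with the transmissions in $\wp$, the MDS-Condition guarantees that all the remaining $|\wp| - M_\wp$ messages in $\wp$ can be decoded. One of these is $X_i$, so the number of additionally decodable messages is $|\wp| - M_\wp - 1 \ge N - 1$, which is exactly what the proposition claims.

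I expect the main conceptual point, rather than any calculation, to be the justification that $\mathbf{V}_i \subseteq \wp$ and $|\mathbf{V}_i| = M_\wp$: one needs to be a little careful that no clever use of transmissions from other coding subspaces can shrink the side-information requirement for $X_i$ below $M_\wp$. This follows from the fact that coding subspaces are built on disjoint supports (Definition~\ref{def:codingsubspace}), so the projection of the answer onto transmissions whose support lies in $\wp$ is a deterministic function of the messages in $\wp$ only, and only these transmissions can carry any information about $X_i$. Once this independence argument is in place, the rest is an immediate application of the two lemmas.
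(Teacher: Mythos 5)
Your proof takes a genuinely different route from the paper's. The paper argues directly from the Privacy Condition by contradiction: if $\mathbf{Y}_i$, the set of messages decodable from $(A,Q,X_{\mathbf{V}_i})$, had size at most $N-1$, then for any $j\in\mathbf{V}_i$ the candidate demand set $\mathbf{Y}_i\cup\{j\}$ could never be fully retrieved with any valid side-information set, by minimality of $\mathbf{V}_i$; hence that set of indices could never be the demand set, violating privacy. Your proof instead routes through the coding-subspace and MDS machinery of Lemma~\ref{lm:mds} and Lemma~\ref{lm:tscs}, establishing $\mathbf{V}_i\subseteq\wp$, $|\mathbf{V}_i|=M_\wp$, $|\wp|>N$, and $|\wp|-M_\wp\ge N$. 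That chain of deductions is structurally illuminating, but it re-imports the decomposition that the appendix's ``alternative'' converse proof is meant to avoid, whereas the paper's argument works straight from the definition of $\mathbf{V}_i$ and the privacy constraint.

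There is also a gap to close. You assert the existence of the MDS parameter $M_\wp$ as ``guaranteed by Lemma~\ref{lm:mds},'' but that lemma says only that the MDS-Condition may be assumed \emph{without loss of optimality}; it does not say that an arbitrary PIR scheme already satisfies it. Proportion~\ref{pp:NumOfMes} is stated and applied for ``any PIR coding scheme'' — it is invoked on the fixed scheme being lower-bounded while constructing the chain $\mathbf{W}_1,\dots,\mathbf{W}_T$ — so your argument as written only proves a weaker statement, valid for MDS-compliant schemes. To repair it, you would need to make the reduction explicit (pass to a no-worse MDS scheme via Lemma~\ref{lm:mds} before applying your argument, and argue that doing so does not damage the quantities the appendix is tracking), or else give a direct argument, as the paper does, that never presupposes the MDS structure.
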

\begin{proof}
	Let $\mathbf{Y}_i$ denote the set of indices of the messages that can be decoded given $X_{\mathbf{V}_i}$.
	Apparently, $ i \in \mathbf{Y}_i$, since $X_i$ can be decoded.
	Suppose $|\mathbf{Y}_i| \le N-1$.
	Then for any $j \in \mathbf{V}_i$, the set of messages $X_{\mathbf{Y}_i\cup\{j\}}$ cannot be decoded given any $N$ messages in $[K]\setminus (\mathbf{Y}_i\cup\{j\})$.
	This is because  $\mathbf{V}_i$ are the minimum set of messages that are required to decode $X_i$.
	Hence,  $X_{\mathbf{Y}_i\cup\{j\}}$ cannot be the demand messages, which violates the privacy condition.
	Therefore, $|\mathbf{Y}_i| \ge N$.
\end{proof}

\begin{proportion}\label{pp:Y}
	For $i \in \mathbf{W}_0$, let $\mathbf{Y}_i$ denote the set of indices of the messages that can be decoded given $X_{\mathbf{V}_i}$.
	Without loss of optimality, we can assume that $\mathbf{Y}_i \cap \mathbf{Y}_j = \emptyset$ for any $i \ne j$.
\end{proportion}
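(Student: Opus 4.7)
The plan is to reduce any overlap among the $\mathbf{Y}_i$'s to the trivial MDS regime, where the claim holds automatically. First I would invoke Lemma~\ref{lm:mds} to restrict attention to coding schemes that satisfy the MDS-Condition in every coding subspace, and then use Definition~\ref{def:codingsubspace} to observe that any $X_k$ decoded from $(A^{[\mathbf{W}_0,\mathbf{S}_0]}, X_{\mathbf{V}_i})$ must lie in the same coding subspace as $X_i$: the transmissions that actually contribute to recovering $X_k$ are precisely those supported in that subspace, hence $k \in \mathbf{Y}_i$ forces $X_k$ and $X_i$ to share a coding subspace $\wp(i)$.

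Next, I would suppose for contradiction that $\mathbf{Y}_i \cap \mathbf{Y}_j \ne \emptyset$ for some distinct $i,j \in \mathbf{W}_0$, and pick $k$ in the intersection. From the previous step, $X_i, X_j, X_k$ all lie in a common coding subspace $\wp = \wp(i) = \wp(j)$. By the MDS-Condition on $\wp$, both $\mathbf{V}_i$ and $\mathbf{V}_j$ are size-$m(\wp)$ subsets of $\wp \cap \mathbf{S}_0$, and by the symmetry of the MDS property any such subset permits decoding of every other message in $\wp$ from the same transmissions. I would therefore replace $\mathbf{V}_j$ by $\mathbf{V}_i$ (or vice versa) without violating~\eqref{eq:v}, producing $\mathbf{V}_i = \mathbf{V}_j$. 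Proposition~1 then forces $R^* = K-M$, matching the trivial single-subspace MDS scheme. In that scheme the minimum decoding sets $\mathbf{V}_i$ can be freely reassigned (using Proposition~\ref{pp:NumOfMes} together with the preceding lemma to handle the $M < N$ boundary case), and a simple pigeonhole-style partition of $\mathbf{S}_0$ among the $N$ demand indices yields disjoint $\mathbf{Y}_i$'s at the analysis level. Hence any putative counterexample to disjointness collapses to a regime where disjointness can be imposed without changing the lower bound.

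The main obstacle I anticipate is justifying that the substitution $\mathbf{V}_j := \mathbf{V}_i$ is legitimate for the \emph{same} query--answer pair $(Q^{[\mathbf{W}_0,\mathbf{S}_0]}, A^{[\mathbf{W}_0,\mathbf{S}_0]})$, rather than requiring a fresh query tailored to a different side-information set. This should follow because the MDS-Condition is an intrinsic property of the linear code restricted to $\wp$: once the transmissions in $\wp$ are fixed, every $m(\wp)$-subset of side-information indices in $\wp$ is interchangeable for decoding purposes. A secondary care-point is the edge case $|\wp| = m(\wp)+1$, where $\wp \setminus \{X_i, X_j\}$ might be too small; but in that case $\wp$ contains at most one non-side-information message, contradicting the assumption that both $X_i$ and $X_j$ lie in $\wp$, so the obstruction does not arise.
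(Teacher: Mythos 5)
Your route is genuinely different from the paper's, and it has a real gap at the finish line. The paper's proof is a direct, scheme-level surgery: if $X_u \in \mathbf{Y}_i \cap \mathbf{Y}_j$ (with the technical assumption $i \notin \mathbf{Y}_j$, $j \notin \mathbf{Y}_i$), they delete $X_u$ from the support of the transmissions that serve $\mathbf{Y}_j$ given $X_{\mathbf{V}_j}$, so $X_u$ drops out of $\mathbf{Y}_j$ while remaining in $\mathbf{Y}_i$; the modification does not increase the transmission count. That is what "without loss of optimality" means here: one can always massage the scheme itself until the $\mathbf{Y}_i$'s are disjoint. You instead try to collapse the overlap case to the $R^* = K-M$ regime via a same-subspace argument and Proportion 1.

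Two places where your argument breaks. First, the step "$k \in \mathbf{Y}_i$ forces $X_k$ and $X_i$ into the same coding subspace" is false without further assumption: a message $X_k$ living in a fully-transmitted coding subspace (one with $m(\wp) = 0$, e.g.\ $|\wp| \le N$) is decodable from the answer alone, hence $k \in \mathbf{Y}_i$ for \emph{every} $i$, while $X_k$, $X_i$, $X_j$ can all sit in three different subspaces. In that configuration the MDS-replacement $\mathbf{V}_j := \mathbf{V}_i$ is not available (they live in different subspaces and typically have different sizes), so you never reach Proportion 1. Note that the paper's assumption $i \notin \mathbf{Y}_j,\ j \notin \mathbf{Y}_i$ is precisely this cross-subspace situation, and the paper handles it by editing the transmissions rather than by trying to equate $\mathbf{V}_i$ and $\mathbf{V}_j$.

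Second, the closing claim that in the $R^* = K-M$ regime "a simple pigeonhole-style partition of $\mathbf{S}_0$ among the $N$ demand indices yields disjoint $\mathbf{Y}_i$'s at the analysis level" is not correct as stated. In the trivial single-subspace MDS scheme, every $\mathbf{V}_i$ must be the full set $\mathbf{S}_0$ (any strictly smaller subset fails to decode), so all $\mathbf{Y}_i$ coincide; they are maximally overlapping, not disjoint, and you cannot "reassign" the $\mathbf{V}_i$'s to smaller disjoint pieces without breaking \eqref{eq:v}. What you actually want to say, and what would make your reduction work, is that in that regime the proposition is not needed: the lower bound $R \ge K - M$ is already delivered directly by Proportion 1, so the downstream counting argument (Lemma 6) can simply skip the disjointness step for such schemes. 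That is a valid alternative way to \emph{use} the would-be proposition, but it does not establish the proposition's literal claim that disjointness can be assumed; it establishes that when it cannot, the target bound holds anyway. If you rephrase the goal accordingly and plug the gap about fully-transmitted subspaces, your argument becomes a legitimate, if less constructive, alternative to the paper's direct modification.
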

\begin{proof}
	Suppose there is one message $X_u$, which can be decoded given either $\mathbf{V}_i$ or $\mathbf{V}_j$. 
	Additionally, we assume $i \not\in \mathbf{Y}_j$ and $j \not\in \mathbf{Y}_i$.
	To construct a new coding scheme, we can remove $X_u$ from the coded transmissions which can be used to decoded $X_u$ and $X_j$ given $X_{\mathbf{V}_j}$.
	After the modification, $X_u$ can still be decoded given $X_{\mathbf{V}_i}$ and $X_{\mathbf{Y}_j \setminus \{u\}}$ can still be decoded given $X_{\mathbf{V}_j}$.
	And the total number of required transmissions does not increase.
\end{proof}

\begin{lemma}
	If $K\le N^2+N+M$, the minimum number of required transmissions is $K-M$. 
\end{lemma}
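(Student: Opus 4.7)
The plan is to combine the lower bound
\[
D\ge (K-M)L-\max_{\mathbf{S}_1,\dots,\mathbf{S}_T}\sum_{t=1}^T H\bigl(X_{\mathbf{S}_t}\bigm|A^{[\mathbf{W}_t,\mathbf{S}_t]},Q^{[\mathbf{W}_t,\mathbf{S}_t]},X_{\mathbf{W}_0^{t-1}\cup\mathbf{S}_0^{t-1}}\bigr)
\]
already proved above with a structural counting bound derived from Proportions~\ref{pp:NumOfMes} and~\ref{pp:Y}. The preceding lemma handles $M<N$, so I assume $M\ge N$; the earlier proportion on coinciding $\mathbf{V}_i$'s disposes of the degenerate case, so I further assume the $\mathbf{V}_i$'s are pairwise distinct and (by the analysis surrounding Eq.~\eqref{eq:mh}) wlog pairwise disjoint with sizes $\lfloor M/N\rfloor$ or $\lceil M/N\rceil$.

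The target intermediate statement is
\[
|\mathbf{W}_0\cup\mathbf{S}_0\cup\mathbf{U}_0|\ge N^2+N+M.
\]
Under the hypothesis $K\le N^2+N+M$ this forces $[K]=\mathbf{W}_0\cup\mathbf{S}_0\cup\mathbf{U}_0$: every $i_t\in[K]\setminus\mathbf{W}_0$ added in the construction of the $\mathbf{W}_t$'s then lies in $\mathbf{S}_0\cup\mathbf{U}_0\subseteq\mathbf{W}_0^{t-1}\cup\mathbf{S}_0^{t-1}\cup\mathbf{U}_0^{t-1}$, so by the earlier analysis each summand vanishes, giving $D\ge(K-M)L$ and hence $R^*\ge K-M$. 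Combined with the trivial MDS achievability, $R^*=K-M$.

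To prove the target bound I would proceed in two stages. First, applying Proportion~\ref{pp:NumOfMes} to each $i\in\mathbf{W}_0$ and Proportion~\ref{pp:Y} to take the $\mathbf{Y}_i$'s pairwise disjoint gives $\bigl|\bigcup_i\mathbf{Y}_i\bigr|\ge N^2$, and after a coefficient-modification step in the spirit of Proportion~\ref{pp:Y}'s proof---removing $X_u$ with $u\in\mathbf{S}_0$ from any transmission that decodes it via some $\mathbf{V}_j$, which is harmless because $X_u$ is already available to the user through the side information---I may further take $\bigcup_i\mathbf{Y}_i$ disjoint from $\mathbf{S}_0$, lifting the count to $|\mathbf{W}_0\cup\mathbf{S}_0\cup\mathbf{U}_0|\ge N+M+N(N-1)=N^2+M$. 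Second, invoking Proportion~\ref{pp:NumOfMes} once more---this time applied to a hypothetical demand containing an index $k\in\mathbf{U}_0$, which is legal under the privacy condition since any $N$-subset may play the role of the demand---yields an auxiliary set $\mathbf{Y}_k$ of $\ge N$ freshly decoded indices; handling disjointness against the primary $\mathbf{Y}_i$'s should supply the remaining $N$ indices needed to close the gap.

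The main obstacle is precisely this last step: cleanly extracting $N$ additional indices from the secondary application of Proportion~\ref{pp:NumOfMes} without double-counting against the $N(N-1)$ non-demand indices already forced into $\mathbf{U}_0$ by the original application to $\mathbf{W}_0$. The proof would require a disjointness lemma analogous to Proportion~\ref{pp:Y} but extended across different hypothetical demand sets; I expect this to follow from the same coefficient-modification technique used throughout the appendix, but it is the delicate point of the argument.
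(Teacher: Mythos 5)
Your lower bound $D\ge(K-M)L-\max\sum_t H(X_{\mathbf{S}_t}|\cdot)$ and the first counting stage ($|\mathbf{W}_0\cup\mathbf{S}_0\cup\mathbf{U}_0|\ge N^2+M$ via Proportions~\ref{pp:NumOfMes} and~\ref{pp:Y}, with the $\mathbf{Y}_i$'s made disjoint from $\mathbf{S}_0$ by the coefficient-modification device) match the paper's setup exactly. The divergence is your target intermediate statement $|\mathbf{W}_0\cup\mathbf{S}_0\cup\mathbf{U}_0|\ge N^2+N+M$: that bound is too strong and is not established (nor needed) by the paper. The paper explicitly computes $|\mathbf{W}_0\cup\mathbf{S}_0\cup\mathbf{U}_0| = N + M + N(N-1) = N^2+M$, leaving up to $N$ indices uncovered when $K\le N^2+N+M$. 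Your proposed fix---re-invoking Proportion~\ref{pp:NumOfMes} on a hypothetical demand through some $k\in\mathbf{U}_0$---does not produce $N$ genuinely new indices: the decoded set $\mathbf{Y}'_k$ for such a $k$ arises from the \emph{same} answer $A^{[\mathbf{W}_0,\mathbf{S}_0]}$ and can (indeed, typically will) coincide with the $\mathbf{Y}_i$ that already contains $k$, i.e.\ $\mathbf{Y}'_k\subseteq\mathbf{W}_0\cup\mathbf{S}_0\cup\mathbf{U}_0$. Unlike in Proportion~\ref{pp:Y}, you cannot modify coefficients to force disjointness here, because removing $k$ from the transmissions that decode it via $\mathbf{V}_i$ would destroy the decodability of $k$ that privacy requires.

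The paper instead closes the $\le N$-index residual by a qualitatively different, privacy-violation argument rather than a counting one: if, in the first substitution, the newly added index $i_1$ lies outside $\mathbf{W}_0\cup\mathbf{S}_0\cup\mathbf{U}_0$ and the chosen $\mathbf{S}_1$ contributes fresh entropy (i.e.\ $H(X_{\mathbf{S}_1}\mid X_{\mathbf{W}_0\cup\mathbf{S}_0\cup\mathbf{U}_0})>0$), then among the at most $N$ fresh indices available, at least one is consumed as side information, so the fresh portion of $\mathbf{W}_1\cup\mathbf{U}_1$ has size $<N$. Proportion~\ref{pp:NumOfMes} (applied in the forward direction) then shows some $N$-subset of fresh indices cannot serve as a demand set, contradicting privacy; hence the corresponding summand must vanish. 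You would need an argument of this structural type, not a refined count, to handle the regime $N^2+M\le K\le N^2+N+M$.
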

\begin{proof}
	According to Proportion~\ref{pp:NumOfMes}, for each $\mathbf{V}_i\ne \emptyset$ ($i\in \mathbf{W}_0$), we have $|\mathbf{Y}_i| \ge N$.
	According to Proportion~\ref{pp:Y}, we can assume that for any $i\ne j \in \mathbf{W}_0$, $\mathbf{Y}_i \cap \mathbf{Y}_j = \emptyset$, then we have
	\begin{align}
	\sum_{i \in \mathbf{W}_0} |\mathbf{V}_i| + |\mathbf{Y}_i| = M+N^2
	\end{align}
	Thus, if $K < N^2 + M$, there must exist $l \in \mathbf{W}_0$ such that $\mathbf{Y}_l = \emptyset$.
	In such cases, $i^* = l$ and $|\cup_{j\in \mathbf{W}_0\setminus i^*} \mathbf{V}_j| = M$.
	That means all conditionally entropies in the summation are zero, except the first term $H(X_{\mathbf{S}_0}) = ML$, and the total number of required transmissions is $K-M$.
	
	If $N^2+M \le  K \le N^2+M+N$, there are enough messages such that for any $i,j\in \mathbf{W}_0$, $\mathbf{Y}_i \cap \mathbf{Y}_j = \emptyset$ and $\mathbf{V}_i \ne \emptyset$.
	However, since $|\mathbf{W}_0\cup\mathbf{S}_0\cup\mathbf{U}_0| \ge M+N$, the number of messages for $X_{\mathbf{W}_1}$, $X_{\mathbf{S}_1}$ and $X_{\mathbf{U}_1}$ is at most $N$.
	The side information that can be used to decode the new message which was added in $\mathbf{W}_1$ is zero.
	Hence, we have 
	\begin{align}
	H(X_{\mathbf{S}_1}|A^{[\mathbf{W}_1,\mathbf{S}_1]},Q^{[\mathbf{W}_1,\mathbf{S}_1]},X_{\mathbf{W}_0\cup\mathbf{S}_0\cup\mathbf{U}_0}) = 0.
	\end{align}
	Otherwise, if $H(X_{\mathbf{S}_1}|X_{\mathbf{W}_0\cup\mathbf{S}_0\cup\mathbf{U}_0}) > 0$, we have $|\mathbf{W}_1\cup\mathbf{U}_1| <N$.
	This means messages indexed by subset of $\mathbf{W}_1\cup\mathbf{S}_1\cup\mathbf{U}_1$ with size $N$ cannot be the indices of demand messages, which violates the privacy condition.
	Therefore, for $K\le N^2+N+M$,  the minimum number of required transmissions is $K-M$.  
\end{proof}

If $K \ge N^2+N+M$, it is possible to select $\mathbf{V}_i$'s ($i\in \mathbf{W}_0$) such that each conditional entropy in the summation can achieve their maximum. The number of required messages for $\mathbf{W}_0$, $\mathbf{S}_0$ and $\mathbf{U}_0$ is
\begin{align}
|\mathbf{W}_0\cup \mathbf{S}_0\cup \mathbf{U}_0| &=  |\mathbf{W}_0|+ |\mathbf{S}_0|+| \mathbf{U}_0|\\
&= N + M + N(N-1)\\
& = N^2+M
\end{align}

As we have shown, only when $i_t \not\in (\mathbf{W}_0^{t-1}\cup\mathbf{S}_0^{t-1}\cup\mathbf{U}_0^{t-1})$, the corresponding conditional entropy is positive.
And for each $i_t \not\in (\mathbf{W}_0^{t-1}\cup\mathbf{S}_0^{t-1}\cup\mathbf{U}_0^{t-1})$, there must have $N-1$ messages that can also be decoded given the new side information messages which are used for decoding $X_{i_t}$.

Hence, we have 
\begin{align}
T^* = \left\lceil\frac{K-M-N^2}{N+\lfloor \frac{M}{N} \rfloor}\right\rceil
\end{align}
And if $K-M-N^2-(T^8-1)( N+\lfloor \frac{M}{N} \rfloor) \le N$, there are at most $N$ messages left after using $X_{\mathbf{W}_0^{T^*-1}\cup\mathbf{S}_0^{T^*-1}\cup\mathbf{U}_0^{T^*-1}}$. Hence, they should be sent separately.
Therefore, we have
\begin{align}
R = \lim_{L\to \infty}\frac{D}{L} \ge K-M - (T^*-1)^+\lfloor \frac{M}{N} \rfloor + (K-M-N^2-(T^*-1)^+( N+\lfloor \frac{M}{N} \rfloor)- N)^+
\end{align}
which can be shown to be equivalent to~\eqref{eq:main}. Therefore we prove the converse of the minimum number of required transmissions in an alternative way.

\section*{ACKNOWLEDGMENT}
This work was supported in part by the Swiss National Science Foundation under Grant 169294.

\bibliographystyle{IEEEtran}
\bibliography{edic}

\end{document}